\theoremstyle{plain}  
\newtheorem{thm}{Theorem}[section] 
\newtheorem{lem}[thm]{Lemma} 
\newtheorem{prop}[thm]{Proposition}
\theoremstyle{definition} 
\newtheorem{defn}{Definition}[section]
\theoremstyle{remark}
\newcommand{\real}{\mathbb{R}}
\newcommand{\bp}{\pmb{p}}
\newcommand{\bq}{\pmb{q}}
\newcommand{\bx}{\pmb{x}}
\newcommand{\by}{\pmb{y}}
\newcommand{\bF}{\pmb{F}}
\newcommand{\bhF}{\pmb{\hat{F}}}
\newcommand{\bhq}{\pmb{\hat{q}}}
\newcommand{\cA}{\mathcal{A}}
\newcommand{\cL}{\mathcal{L}}
\newcommand{\cO}{\mathcal{O}}
\newcommand{\cP}{\mathcal{P}}
\newcommand{\cQ}{\mathcal{Q}}
\newcommand{\cR}{\mathcal{R}}
\newcommand{\cS}{\mathcal{S}}
\newcommand{\cX}{\mathcal{X}}
\newcommand{\myE}{\mathbb{E}}
\newcommand{\myP}{\mathbb{P}}
\newcommand{\mys}{{\rm s}}
\newcommand{\myS}{{\rm S}}
\newcommand{\lest}{\leq_{\rm st}}
\newcommand{\st}{\preceq_{\rm st}}
\newcommand{\icx}{\preceq_{\rm icx}}
\newcommand{\xctr}{x_{\rm CTR}}  
\newcommand{\xhres}{x_{\rm HRES}}  
\newcommand{\mptb}{m_{\rm PTB}}   
\newcommand{\pzero}{p_{\rm ZERO}}   
\newcommand{\xptb}{x_{\rm PTB}}   
\newcommand{\IDRcw}{IDR$_{\rm cw}$}
\newcommand{\IDRicx}{IDR$_{\rm icx}$}
\newcommand{\IDRsbg}{IDR$_{\rm sbg}$}
\newcommand{\diff}{\, {\rm d}}
\newcommand{\one}{\mathbbm{1}}
\newcommand{\hsp}{\hspace{0.2mm}}
\newcommand{\orange}{\color{black}}
\newcommand{\green}{\color{black}}
\title[Isotonic Distributional Regression]{Isotonic Distributional Regression}
\author{Alexander Henzi and Johanna F.~Ziegel}
\address{University of Bern, Switzerland}
\email{alexander.henzi@stat.unibe.ch \ johanna.ziegel@stat.unibe.ch}
\author{Tilmann Gneiting}
\address{Heidelberg Institute for Theoretical Studies (HITS) and Karlsruhe Institute of \\ Technology (KIT), Germany}
\email{tilmann.gneiting@h-its.org}
\begin{document}

\begin{abstract} 
Isotonic distributional regression (IDR) is a powerful nonparametric
technique for the estimation of conditional distributions under order
restrictions.  In a nutshell, IDR learns conditional distributions
that are calibrated, and simultaneously optimal relative to
comprehensive classes of relevant loss functions, subject to
isotonicity constraints in terms of a partial order on the covariate
space.  Nonparametric isotonic quantile regression and {\green
  nonparametric isotonic} binary {\green regression} emerge as special
cases.  For prediction, we propose an interpolation method that
generalizes extant specifications under the pool adjacent violators
algorithm.  We recommend the use of IDR as a generic benchmark
technique in probabilistic forecast problems, as it does not involve
any parameter tuning nor implementation choices, except for the
selection of a partial order on the covariate space.  The method can
be combined with subsample aggregation, with the benefits of smoother
regression functions and {\green gains in} computational {\orange
  efficiency.}  {\green In a simulation study, we compare methods for
  distributional regression in terms of the continuous ranked
  probability score (CRPS) and $L_2$ estimation error, which are
  closely linked.}  In a case study on raw and postprocessed
quantitative precipitation forecasts from a leading numerical weather
prediction system, IDR is competitive with state of the art
techniques.

\medskip
\noindent
{\em Keywords:} conditional distribution estimation; monotonicity;
probabilistic forecast; proper scoring rule; stochastic order;
subagging; weather prediction
\end{abstract} 

\section{Introduction}  \label{sec:introduction} 
\footnotetext{\emph{This article is published as:} Henzi, A., Ziegel, J.F.~and Gneiting, T. (2021), Isotonic distributional regression. \emph{Journal of the Royal Statistical Society Series B}, \url{https://doi.org/10.1111/rssb.12450}.}
There is an emerging consensus in the transdisciplinary literature
that regression analysis should be distributional, with
\citet{Hothorn2014} arguing forcefully that 
\begin{quote}
[t]he ultimate goal of regression analysis is to obtain information
about the conditional distribution of a response given a set of
explanatory variables.
\end{quote}
Distributional regression marks a clear break from the classical view
of regression, which has focused on estimating the conditional mean of
the response variable in terms of one or more explanatory variable(s) or
covariate(s).  Later extensions have considered other functionals of
the conditional distributions, such as quantiles or expectiles
\citep{Koenker2005, Newey1987, SchulzeWaltrup2015}.  However, the
reduction of a conditional distribution to a single-valued functional
results in tremendous loss of information.  Therefore, from the
perspectives of both estimation and prediction, regression analysis
ought to be distributional.

In the extant literature, both parametric and nonparametric approaches
to distributional regression are available.  Parametric approaches
assume that the conditional distribution of the response is of a
specific type (e.g., Gaussian) with an analytic relationship between
the covariates and the distributional parameters.  Key examples
include statistically postprocessed meteorological and hydrologic
forecasts, as exemplified by \citet{Gneiting2005b},
\citet{Schefzik2013} and \citet{Vannitsem2018}.  In powerful
semi-parametric variants, the conditional distributions remain
parametric, but the influence of the covariates on the parameter
values is modeled nonparametrically, e.g., by using generalized
additive models \citep{Rigby2005, Klein2015, Umlauf2015} or modern
neural networks \citep{Rasp2018, Gasthaus2019}.  In related
developments, semiparametric versions of quantile regression
\citep{Koenker2005} and transformation methods \citep{Hothorn2014} can
be leveraged for distributional regression.

Nonparametric approaches to distributional regression include kernel
or nearest neighbor methods that depend on a suitable notion of
distance on the covariate space.  Then, the empirical distribution of
the response for neighboring covariates in the training set is used
for distributional regression, with possible weighting in dependence
on the distance to the covariate value of interest.  Kernel smoothing
methods and mixture approaches allow for absolutely continuous
conditional distributions \citep{Hall1999, Dunson2007, Li2008}.
Classification and regression trees partition the covariate space into
leaves, and assign constant regression functions on each leaf
\citep{Breiman1984}.  Linear aggregation via bootstrap aggregation
(bagging) or subsample aggregation (subagging) yields random forests
\citep{Breiman2001}, which are increasingly being used to generate
conditional predictive distributions, as proposed by
\citet{Hothorn2004} and \citet{Meinshausen2006}.

Isotonicity is a natural constraint in estimation and prediction
problems.  Consider, e.g., postprocessing techniques in weather
forecasting, where the covariates stem from the output of numerical
weather prediction (NWP) models, and the response variable is the
respective future weather quantity.  Intuitively, if the NWP model
output indicates a larger precipitation accumulation, the associated
regression functions ought to be larger as well.  Isotonic
relationships of this type hold in a plethora of applied settings.  In
fact, standard linear regression analysis rests on the assumption of
isotonicity, in the form of monotonicity in the values of the
covariate(s), save for changes in sign.

Concerning nonparametric regression for a conditional functional, such
as the mean or a quantile, there is a sizable literature on estimation
under the constraint of isotonicity.  The classical work of
\citet{Brunk1955}, \citet{Ayer1955}, \citet{vanEeden1958},
\citet{Bartholomew1959a, Bartholomew1959b} and \citet{Miles1959} is
summarized in \citet{Barlow1972}, \citet{RWD1988} and
\citet{deLeeuw2009}.  Subsequent approaches include Bayesian and
non-Bayesian smoothing techniques \citep[e.g.,][]{Mammen1991,
  Neelon2004, Dette2006, Shively2009}, and reviews are available in
\citet{Groeneboom2014} and \citet{Guntuboyina2018}.

In distributional regression it may not be immediately clear what is
meant by isotonicity, and the literature typically considers one
ordinal covariate only \citep[e.g.,][]{Hogg1965, Rojo2003,
  ElBarmi2005, Davidov2012}, with a notable exception being the work
of \citet{Moesching2020}, whose considerations allow for a real-valued
covariate.  In the general case of a partially ordered covariate
space, which we consider here, it is unclear whether semi- or
nonparametric techniques might be capable of handling monotonicity
contraints, and suitable notions of isotonicity remain to be
developed.

To this end, we assume that the response $Y$ is real-valued, and equip
the covariate space $\cX$ with a partial order $\preceq$.  Our aim is
to estimate the conditional distribution of $Y$ given the covariate
$X$, for short $\cL(Y \hsp | \hsp X)$, on training data, in a way that
respects the partial order, and we desire to use this estimate for
prediction.  Formally, a distributional regression technique generates
a mapping from $x \in \cX$ to a probability measure $F_x$, which
serves to model the conditional distribution $\cL(Y \hsp | \hsp X =
x)$.  This mapping is isotonic if $x \preceq x'$ implies $F_x \lest
F_{x'}$, where $\lest$ denotes the usual stochastic order, i.e., $G
\lest H$ if $G(y) \geq H(y)$ for $y \in \real$, where we use the same
symbols for the probability measures $G$, $H$ and their associated
conditional cumulative distribution functions (CDFs).  Equivalently,
$G \lest H$ holds if $G^{-1}(\alpha) \leq H^{-1}(\alpha)$ for $\alpha
\in (0,1)$, where $G^{-1}(\alpha) = \inf \{ y \in \real : G(y) \geq
\alpha \}$ is the standard quantile function \citep{Shaked2007}.

Useful comparisons of predictive distributions are in terms of proper
scoring rules, of which the most prominent and most relevant instance
is the continuous ranked probability score
\citep[CRPS;][]{Matheson1976, Gneiting2007a}.  We show that there is a
unique isotonic distributional regression that is optimal with respect
to the CPRS (Theorem \ref{thm:existence}), and refer to it as the {\em
  isotonic distributional regression} (IDR).  As it turns out, IDR is
a universal solution, in that the estimate is optimal with respect to
a broad class of proper scoring rules (Theorem
\ref{thm:universality}).  Classical special cases such as
nonparametric isotonic quantile regression and probabilistic
classifiers for threshold-defined binary events are nested by IDR.
Simultaneously, IDR avoids pitfalls commonly associated with
nonparametric distributional regression, such as suboptimal partitions
of the covariate space and level crossing \citep[p.~1167]{Athey2019}.

\begin{figure}[t]
\centering
\includegraphics[width = \textwidth]{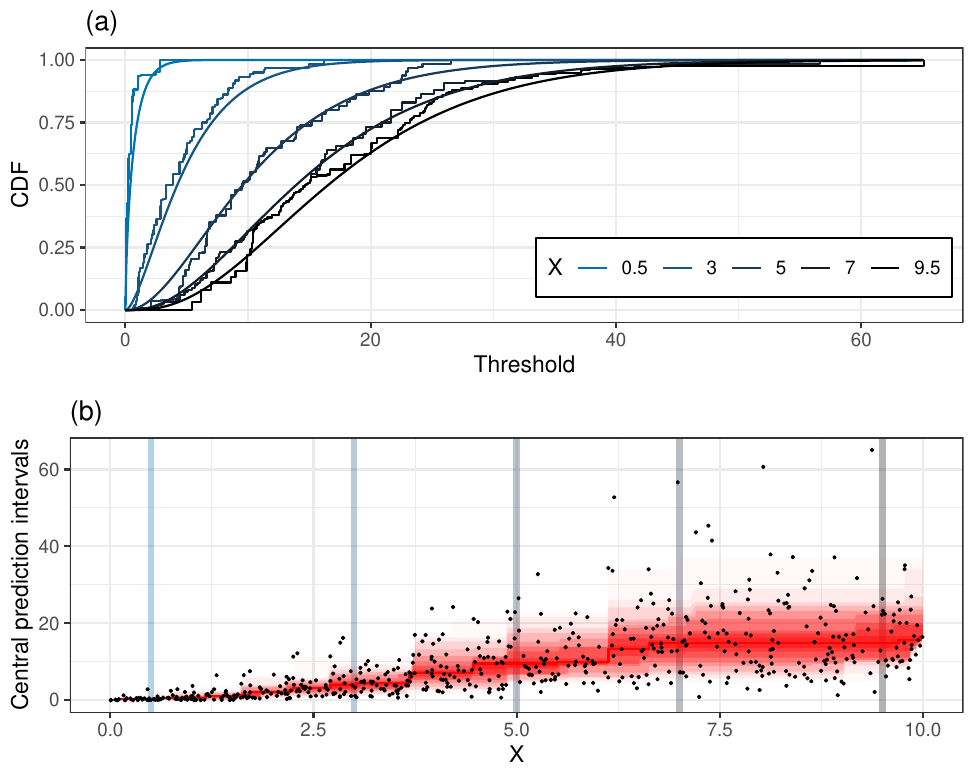}
\caption{Simulation example for a sample of size $n = 600$ from the
  distribution in \eqref{eq:sim}: (a) True conditional CDFs (smooth)
  and IDR estimates (step functions) for selected values of the
  covariate. (b) IDR estimated conditional distributions.  The shaded
  bands correspond to probability mass 0.10 each, with the darkest
  shade marking the central interval. Vertical strips indicate
  the cross-sections corresponding to the values of the covariate in 
  panel (a).  \label{fig:sim}}
\end{figure}

For illustration, consider the joint distribution of $(X,Y)$, where
$X$ is uniform on $(0, 10)$ and
\begin{equation}  \label{eq:sim} 
Y \mid X \sim \textrm{Gamma}(\textrm{shape} = \sqrt{X}, \, 
   \textrm{scale} = \min \{ \max \{ X, 1 \}, 6 \}), 
\end{equation}
so that $\cL(Y \hsp | \hsp X = x) \leq_{\rm st} \cL(Y \hsp | \hsp X =
x')$ if $x \leq x'$.  Figure \ref{fig:sim} shows IDR conditional CDFs
and quantiles as estimated on a training set of size $n = 600$.  IDR
is capable of estimating both the strongly right-skewed conditional
distributions for lower values of $X$ and the more symmetric
distributions as $X$ increases.  The CDFs are piecewise constant, and
they never cross each other.  The computational cost of IDR is of
order at least $\cO(n \log n)$ and may become prohibitive as $n$
grows.  However, IDR can usefully be combined with subsample
aggregation (subagging), much in the spirit of random forests
\citep{Breiman2001}, with the benefits of reduced computational cost
under large training samples, smoother regression functions, and
(frequently) improved predictive performance.

The remainder of the paper is organized as follows.  The
methodological core of the paper is in Section \ref{sec:IDR}, where we
prove existence, uniqueness and universality of the IDR solution,
discuss computational issues and asymptotic consistency, and propose
strategies for prediction.  In Section \ref{sec:orders} we turn to the
critical issue of the choice of a partial order on the covariate
space.  Section \ref{sec:simulation_study} reports on a comparative
simulation study that addresses both prediction and estimation, and
Section \ref{sec:case_study} is devoted to a case study on
probabilistic quantitative precipitation forecasts, with covariates
provided by the European Centre for Medium-Range Weather Forecasts
(ECMWF) ensemble system.  Precipitation accumulations feature
unfavorable properties that challenge parametric approaches to
distributional regression: The conditional distributions have a point
mass at zero, and they are continuous and right skewed on the positive
half-axis.  In a comparison to state-of-the-art methods that have been
developed specifically for the purpose, namely Bayesian Model
Averaging \citep[BMA;][]{Sloughter2007}, Ensemble Model Output
Statistics \citep[EMOS;][]{Scheuerer2014}, and Heteroscedastic
Censored Logistic Regression \citep[HCLR;][]{Messner2014}, the
(out-of-sample) predictive performance of IDR is competitive, despite
the method being generic, and being fully automatic once a partial
order on the covariate space has been chosen.

We close the paper with a discussion in Section \ref{sec:discussion},
where we argue that IDR provides a very widely applicable, competitive
benchmark in probabilistic forecasting problems.  The use of benchmark
techniques has been called for across application domains
\citep[e.g.,][]{Rossi2013, Pappenberger2015, BIS2016, Vogel2018}, and
suitable methods should be competitive in terms of predictive performance,
while avoiding implementation decisions that may vary from user to
user.  IDR is well suited to this purpose, as it is entirely generic,
does not involve any implementation decisions, other than the choice
of the partial order, applies to all types of real-valued outcomes
with discrete, continuous or mixed discrete-continuous distributions,
and accommodates general types of covariate spaces.

\section{Isotonic distributional regression}  \label{sec:IDR}

We proceed to introduce the isotonic distributional regression (IDR)
technique.  To this end, we first review basic facts on proper scoring
rules and notions of calibration.  Then we define the IDR solution,
prove existence, uniqueness and universality, and discuss its
computation and asymptotic consistency.  Thereafter, we turn from
estimation to prediction and describe how IDR can be used in
out-of-sample forecasting.  Throughout, we identify a Borel
probability measure on the real line $\real$ with its cumulative
distribution function (CDF), and we denote the extended real line by
$\bar{\real} = [-\infty,\infty]$.

\subsection{Preliminaries}  \label{subsec:preliminaries} 

Following \citet{Gneiting2007a}, we argue that distributional
regression techniques should be compared and evaluated using proper
scoring rules.  A {\em proper scoring rule}\/ is a function $\myS :
\cP \times \real \to \bar{\real}$, where $\cP$ is a
suitable class of probability measures on $\real$, such that
$\myS(F,\cdot)$ is measurable for any $F \in \cP$, the integral $\int
\myS(G,y) \diff F(y)$ exists, and
\[
\int \myS(F,y) \diff F(y) \leq \int \myS(G,y) \diff F(y)
\]
for all $F,G \in \cP$.  A key example is the {\em continuous ranked
  probability score}\/ (CRPS), which is defined for all Borel
probability measures, and given as
\[
{\rm CRPS}(F,y) = \int_{\real} \left( F(z) - \one \{ y \le z \} \right)^2 \diff z. 
\]
Introduced by \citet{Matheson1976}, the CRPS has become popular across
application areas and methodological communities, both for the
purposes of evaluating predictive performance and as a loss function
in estimation; see, e,g., \citet{Hersbach2000}, \citet{Gneiting2005b},
\citet{Hothorn2014}, \citet{Pappenberger2015}, \citet{Rasp2018} and
\citet{Gasthaus2019}.  The CRPS is reported in the same unit as the
response variable, and it reduces to the absolute error, $|x-y|$, if
$F$ is the point or Dirac measure in $x \in \real$.

Results in \citet{Laio2007}, \citet{Ehm2016} and
\citet{Bouallegue2018} imply that the CRPS can be represented
equivalently as
\begin{align}  
{\rm CRPS}(F,y) 
& = 2 \int_{(0,1)} \textrm{QS}_\alpha(F,y) \diff \alpha \label{eq:CRPS2} \\ 
& = 2 \int_{(0,1)} \int_{\real} \myS^Q_{\alpha, \hsp \theta}(F,y) \diff \theta \diff \alpha \label{eq:CRPS3} \\ 
& = \int_{\real} \int_{(0,1)} \myS^P_{z, \hsp c}(F,y) \diff c \diff z, \label{eq:CRPS4}
\end{align}
where the mixture representation \eqref{eq:CRPS2} is in terms of the
asymmetric piecewise linear or pinball loss,
\begin{equation}  \label{eq:QS}
\textrm{QS}_\alpha(F,y) = 
\begin{cases} 
(1 - \alpha) \, (F^{-1}(\alpha) - y), & y \leq F^{-1}(\alpha), \\ 
\alpha \, ( \hsp y - F^{-1}(\alpha)), & y \geq F^{-1}(\alpha), \\ 
\end{cases} 
\end{equation}
which is customarily thought of as a quantile loss function, but can
be identified with a proper scoring rule \citep[Theorem
  3]{Gneiting2011a}.  The representations \eqref{eq:CRPS3} and
\eqref{eq:CRPS4} express the CRPS in terms of the {\em elementary}\/
or {\em extremal scoring functions}\/ for the $\alpha$-quantile
functional, namely,
\begin{equation}  \label{eq:S_Q}
\myS^Q_{\alpha, \hsp \theta}(F,y) = \begin{cases} 
1 - \alpha, & y \le \theta < F^{-1}(\alpha), \\
\alpha, & F^{-1}(\alpha) \le \theta < y, \\
0, & \textrm{otherwise}, \\ 
\end{cases} 
\end{equation}
where $\theta \in \real$; and for probability assessments of the
binary outcome $\one \{ y \leq z \}$ at the threshold value $z \in
\real$, namely
\begin{equation}  \label{eq:S_P}
\myS^P_{z, \hsp c}(F,y) = \begin{cases} 
1 - c, & F(z) < c, \ y \leq z, \\
c, & F(z) \geq c, \ y > z, \\
0, & \textrm{otherwise}, \\ 
\end{cases} 
\end{equation}
where $c \in (0,1)$.  For background information on elementary or
extremal scoring functions and related concepts see \citet{Ehm2016}.

Predictive distributions ought to be calibrated \citep{Dawid1984,
  Diebold1998, Gneiting2007b}, in the broad sense that they should be
statistically compatible with the responses, and various notions of
calibration have been proposed and studied.  In the spirit of
\citet{Gneiting2013}, we consider the joint distribution $\myP$ of the
response $Y$ and the distributional regression $F_X$.  The most widely
used criterion is {\em probabilistic calibration}, which requires that
the {\em probability integral transform}\/ (PIT), namely, the random
variable
\begin{equation}  \label{eq:PIT} 
Z = F_X(Y-) + V \left( F_X(Y) - F_X(Y-) \right) \! ,  
\end{equation} 
be standard uniform, where $F_X(Y-) = \lim_{y \uparrow Y} F_X(y)$ and
$V$ is a standard uniform variable that is independent of $F_X$ and
$Y$.  If $F_X$ is continuous the PIT is simply $Z = F_X(Y)$.  Here we
introduce the novel notion of {\em threshold calibration}, requiring
that
\begin{equation}  \label{eq:threshold_calibration} 
\myP(Y \leq y \hsp \hsp| \hsp F_X(y)) = F_X(y)  
\end{equation} 
almost surely for $y \in \real$, which implies {\em marginal
  calibration}, defined as $\myP(Y \leq y) = \myE(F_X(y))$ for $y \in
\real$.  If $F_X = \cL(Y \hsp | \hsp X)$ then it is calibrated in any
of the above senses \citep[Theorem 2.8]{Gneiting2013}.
 
\subsection{Existence, uniqueness and universality}  \label{subsec:theory} 

A partial order relation $\preceq$ on a set $\cX$ has the same
properties as a total order, namely reflexivity, antisymmetry and
transitivity, except that the elements need not be comparable, i.e.,
there might be elements $x \in \cX$ and $x' \in \cX$ such that neither
$x \preceq x'$ nor $x' \preceq x$ holds.  A key example is the
componentwise order on $\real^n$.

For a positive integer $n$ and a partially ordered set $\cX$, we
define the classes
\begin{align*}
\cX^n_{\uparrow} &= \{ \bx =(x_1,\dots,x_n) \in \cX^n: x_1 \preceq \cdots \preceq x_n \}, \\
\cX^n_{\downarrow} &= \{ \bx=(x_1,\dots,x_n) \in \cX^n: x_1 \succeq \cdots \succeq x_n \}
\end{align*}
of the increasingly and decreasingly (totally) ordered tuples in
$\cX$, respectively.  Similarly, given a further partially ordered set
$\cQ$ and a vector $\bx=(x_1,\dots,x_n) \in \cX^n$, the classes
\begin{align*}
\cQ^n_{\uparrow, \hsp \bx} &= \{ \bq =(q_1,\dots,q_n)\in \cQ^n : q_i \preceq q_j \textrm{ if } x_i \preceq x_j \},\\
\cQ^n_{\downarrow, \hsp \bx} &= \{ \bq =(q_1,\dots,q_n)\in \cQ^n : q_i \succeq q_j \textrm{ if } x_i \preceq x_j \} 
\end{align*}
comprise the increasingly and decreasingly (partially) ordered tuples
in $\cQ$, with the order induced by the tuple $\bx$ and the partial
order $\preceq$ on $\cX$.

Let $I \subseteq \real$ be an interval, and let $\myS$ be a proper
scoring rule with respect to a class $\cP$ of probability
distributions on $I$ that contains all distributions with finite
support.  Given training data in the form of a covariate vector $\bx =
(x_1, \ldots, x_n) \in \cX^n$ and response vector $\by = (y_1, \ldots,
y_n) \in I^n$, we may interpret any mapping from $\bx \in \cX^n$ to
$\cP^n$ as a distributional regression function.  Throughout, we equip
$\cP$ with the usual stochastic order. 

\begin{defn}[$\myS$-based regression]  \label{def:regression} 
An element $\bhF = (\hat{F}_1, \ldots, \hat{F}_n) \in \cP^n$ is an
$\myS$-\textit{based isotonic regression}\/ of $\by \in I^n$ on $\bx
\in \cX^n$, if it is a minimizer of the empirical loss
\[
\ell_{\myS} (\bF) = \frac{1}{n} \sum_{i=1}^n \myS(F_i, y_i)
\]
over all $\bF = (F_1, \dots, F_n)$ in $\cP^n_{\uparrow, \hsp \bx}$.
\end{defn}

In plain words, an $\myS$-based isotonic regression achieves the best
fit in terms of the scoring rule $\myS$, subject to the conditional
CDFs $\hat{F}_1, \ldots, \hat{F}_n$ satisfying partial order
constraints induced by the covariate values $x_1, \ldots, x_n$.  The
definition and the subsequent results can be extended to losses of the
form $\ell_{\myS} (\bF) = \sum_{i=1}^n w_i \hsp \hsp \myS(F_i, y_i)$
with rational, strictly positive weights $w_1, \ldots, w_n$.  The
adaptations are straightforward and left to the reader.

Furthermore, the definition of an $\myS$-based isotonic regression as
a minimizer of $\ell_{\myS}$ continues to apply when $\mathcal{X}$ is
equipped with a pre- or quasiorder $\preceq$ instead of a partial
order.  Preorders are not necessarily antisymmetric, and so there
might be elements $x, x'$ such that $x \preceq x'$ and $x' \preceq x$
but $x' \neq x$.  In this setting, we define $x$ and $x'$ to be
equivalent if $x \preceq x'$ and $x' \preceq x$, and set $[x]
\preceq_p [x']$ if representatives $u, u'$ of the equivalence classes
$[x], [x']$ satisfy $u \preceq u'$.  Then the binary relation
$\preceq_p$ defines a partial order on the set of equivalence classes,
and the $\myS$-based isotonic regression with the new covariates and
the partial order $\preceq_p$ coincides with the original $\myS$-based
isotonic regression.

In Appendix \ref{app:theory} we prove the following result. 

\begin{thm}[existence and uniqueness]  \label{thm:existence}
There exists a unique {\rm CRPS}-based isotonic regression\/ $\bhF \in
\cP^n$ of\/ $\by$ on\/ $\bx$.
\end{thm}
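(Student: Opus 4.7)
The plan is to exploit the pointwise-in-$z$ decomposition of the CRPS. Expanding the loss,
\[
\ell_{\rm CRPS}(\bF) = \frac{1}{n} \int_\real \sum_{i=1}^n \bigl(F_i(z) - \one\{y_i \leq z\}\bigr)^2 \diff z,
\]
and because $F_i \lest F_j$ is equivalent to $F_i(z) \geq F_j(z)$ for all $z$, the feasibility constraint $\bF \in \cP^n_{\uparrow, \hsp \bx}$ also decomposes pointwise in $z$. This invites solving an isotonic regression separately at each threshold and then gluing the resulting values into CDFs along $z$.

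For each fixed $z \in \real$ I would consider
\[
\bhq(z) = \arg\min_{\bq \in [0,1]^n_{\downarrow, \hsp \bx}} \sum_{i=1}^n \bigl(q_i - \one\{y_i \leq z\}\bigr)^2.
\]
The feasible set is a closed convex polytope in $\real^n$, so strict convexity of the squared Euclidean distance yields a unique minimizer; this is nothing other than standard isotonic regression of binary targets with respect to the order on $\{1,\dots,n\}$ induced by $\bx$ and $\preceq$. Setting $\hat{F}_i(z) := \hat{q}_i(z)$, the stochastic-order constraint $\bhF \in \cP^n_{\uparrow, \hsp \bx}$ is automatic from the definition of $[0,1]^n_{\downarrow, \hsp \bx}$.

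The main obstacle is to verify that each $z \mapsto \hat{F}_i(z)$ is a bona fide CDF. Monotonicity in $z$ I would obtain from the classical fact that the projection onto an isotonic cone is nondecreasing in the target vector: since $\one\{y_i \leq z_1\} \leq \one\{y_i \leq z_2\}$ componentwise whenever $z_1 \leq z_2$, the projections satisfy $\hat{q}_i(z_1) \leq \hat{q}_i(z_2)$ for every $i$. Right-continuity follows from continuity of the projection in its target combined with right-continuity of the indicators in $z$. The limits $\hat{F}_i(-\infty) = 0$ and $\hat{F}_i(+\infty) = 1$ are immediate: for $z < \min_j y_j$ all targets vanish so the unique minimizer is the zero vector, and for $z \geq \max_j y_j$ all targets equal one so the unique minimizer is the all-ones vector, both of which lie in the feasible set.

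Optimality and uniqueness then follow by a Fubini argument. For any feasible $\bF$, the pointwise inequality $\sum_i (F_i(z) - \one\{y_i \leq z\})^2 \geq \sum_i (\hat{F}_i(z) - \one\{y_i \leq z\})^2$ integrates (after justifying measurability of $z \mapsto \hat{q}_i(z)$ from the continuity obtained above) to $\ell_{\rm CRPS}(\bF) \geq \ell_{\rm CRPS}(\bhF)$, so $\bhF$ is a minimizer. If $\bF$ is another minimizer, the integrated inequality must hold with equality a.e., and pointwise uniqueness of the isotonic projection forces $F_i(z) = \hat{F}_i(z)$ for Lebesgue-almost every $z$; right-continuity of CDFs promotes this to equality everywhere. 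The only technical friction I anticipate is the regularity (measurability, continuity) of the projection map $z \mapsto \bhq(z)$, but this reduces to standard properties of Euclidean projections onto closed convex sets.
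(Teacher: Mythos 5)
Your proposal is correct and follows essentially the same route as the paper: decompose the CRPS pointwise in the threshold $z$, solve the isotonic least-squares problem for the binary targets $\one\{y_i\leq z\}$ at each $z$, verify that the resulting functions are genuine CDFs, and integrate the pointwise optimality to conclude. The only difference is cosmetic: where you invoke abstract properties of the isotonic projection (monotonicity in the target vector and nonexpansiveness) to get monotonicity and right-continuity in $z$, the paper reads these off the explicit max--min representation of the solution from \citet{Jordan2019} --- which is also the standard way to justify the ``classical fact'' you cite for general partial orders.
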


We refer to this unique $\bhF$ as the {\em isotonic distributional
  regression} (IDR) of $\by$ on $\bx$.  In the particular case of a
total order on the covariate space, and assuming that $x_1 < \cdots <
x_n$, for each $z \in I$ the solution $\bhF(z) = (\hat{F}_1(z),
\ldots, \hat{F}_n(z))$ is given by
\begin{equation}  \label{eq:solution}
\hat{F}_i(z) = \min_{k = 1, \ldots, i} \max_{j = k, \ldots, n} \,
\frac{1}{j - k + 1} \sum_{l=k}^j \one \{ y_l \leq z \}
\end{equation}
for $i = 1, \ldots, n$; see eqs.~(1.9)--(1.13) of \citet{Barlow1972}.
A similar max--min formula applies under partial orders
\citep{Robertson1980, Jordan2021}, and it follows that $\hat{F}_i$ is
piecewise constant with any points of discontinuity at $y_1, \ldots,
y_n$.

At first sight, the specific choice of the CRPS as a loss function may
seem arbitrary.  However, the subsequent result, which we prove in
Appendix \ref{app:theory}, reveals that IDR is
simultaneously optimal with respect to broad classes of proper scoring
rules that include all relevant choices in the extant literature.  The
popular logarithmic score allows for the comparison of absolutely
continuous distributions with respect to a fixed dominating measure
only and thus is not applicable here.  Statements concerning
calibration are with respect to the empirical distribution of the
training data $(x_1,y_1), \ldots, (x_n,y_n)$.

\begin{thm}[universality]  \label{thm:universality}
The {\rm IDR} solution\/ $\bhF$ of\/ $\by$ on\/ $\bx$ is threshold
calibrated and has the following properties.
\begin{itemize}
\item[i)] The\/ {\rm IDR} solution\/ $\bhF$ is an {\rm
  $\myS$}-based isotonic\/ regression of\/ $\by$ on\/ $\bx$ under any scoring rule of
  the form
  \begin{equation}  \label{eq:CRPS_Q}
  \myS(F,y) = \int_{(0,1) \times \real} \myS^Q_{\alpha, \hsp \theta}(F,y) \diff H(\alpha,\theta)
  \end{equation}
  or 
  \begin{equation}  \label{eq:CRPS_P}
  \myS(F,y) = \int_{\real \times (0,1)} \myS^P_{z, \hsp c}(F,y) \diff M(z,c),
  \end{equation}
  where\/ $\myS^Q_{\alpha, \hsp \theta}$ is the elementary quantile
  scoring function\/ \eqref{eq:S_Q}, $\myS^P_{z, \hsp c}$ is the
  elementary probability scoring rule \eqref{eq:S_P}, and\/ $H$ and\/
  $M$ are locally finite Borel measures on\/ $(0,1) \times \real$
  and\/ $\real \times (0,1)$, respectively.
\item[ii)] For every\/ $\alpha \in (0,1)$ it holds that\/
  $\bhF^{-1}(\alpha) = (\hat{F}_1^{-1}(\alpha), \ldots,
  \hat{F}_n^{-1}(\alpha))$ is a minimizer of
  \begin{equation}  \label{eq:iso-quant}
  \frac{1}{n} \sum_{i=1}^n \mys_\alpha(\theta_i,y_i)
  \end{equation}
  over all\/ $\pmb{\theta} = (\theta_1, \ldots, \theta_n) \in
  I^n_{\uparrow, \hsp \bx}$, under any function\/ $\mys_\alpha : I
  \times I \to \bar{\real}$ which is left-continuous in both arguments
  and such that\/ $\myS(F,y) = \mys_\alpha(F^{-1}(\alpha),y)$ is a
  proper scoring rule on\/ $\cP$.
\item[iii)] For every threshold value\/ $z \in I$, it is true that
  $\bhF(z) = (\hat{F}_1(z), \ldots, \hat{F}_n(z))$ is a minimizer of
  \begin{equation}   \label{eq:z_optimal}
  \frac{1}{n} \sum_{i=1}^n \mys(\eta_i, \one \{ y_i \leq z \})
  \end{equation}
  over all ordered tuples\/ $\pmb{\eta} = (\eta_1, \ldots, \eta_n) \in
  [0,1]^n_{\downarrow, \hsp \bx}$, under any function\/ $\mys : [0,1]
  \times \{ 0, 1 \} \to \bar{\real}$ that is a proper scoring rule for
  binary events, which is left-continuous in its first argument,
  satisfies\/ $\mys(0,y) = \lim_{p \to 0}\mys(p,y)$, and is
  real-valued, except possibly\/ $\mys(0,1) = - \infty$ or\/
  $\mys(1,0) = -\infty$.
  \end{itemize}
\end{thm}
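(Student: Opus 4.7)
The key structural observation is that the elementary scoring functions have a product-like form: $\myS^Q_{\alpha, \hsp \theta}(F,y)$ depends on $F$ only through $F^{-1}(\alpha)$, while $\myS^P_{z, \hsp c}(F,y)$ depends on $F$ only through $F(z)$. Moreover, the constraint $\bF \in \cP^n_{\uparrow, \hsp \bx}$ is equivalent to $(F_1(z), \ldots, F_n(z)) \in [0,1]^n_{\downarrow, \hsp \bx}$ for every $z \in I$, and to $(F_1^{-1}(\alpha), \ldots, F_n^{-1}(\alpha)) \in I^n_{\uparrow, \hsp \bx}$ for every $\alpha \in (0,1)$. Thus the CRPS minimization of Theorem \ref{thm:existence}, expressed via \eqref{eq:CRPS4}, decouples at each threshold $z$ into a separate minimization of the Brier loss $\sum_{i=1}^n (\eta_i - \one\{y_i \leq z\})^2$ over $\pmb{\eta} \in [0,1]^n_{\downarrow, \hsp \bx}$. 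The min--max formula in \eqref{eq:solution} and its partial-order extension identify $\bhF(z)$ as the isotonic regression of the indicator vector $(\one\{y_i \leq z\})_{i=1}^n$ onto this monotone cone.

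I would prove part iii) first. Fix $z \in I$; by the defining averaging property of isotonic regression, on each level set of $\bhF(z)$ the value equals the empirical mean of $\one\{Y \leq z\}$. This yields threshold calibration \eqref{eq:threshold_calibration}, and also pointwise optimality of $\bhF(z)$ against every elementary score $\myS^P_{z, \hsp c}$ for each $c \in (0,1)$. To upgrade to an arbitrary proper scoring rule $\mys$ for binary events satisfying the hypotheses of part iii), I would invoke the Schervish-type mixture representation: any such $\mys$ decomposes, up to a term depending on $y$ alone (which is constant under the minimization in \eqref{eq:z_optimal}), as a nonnegative mixture over $c \in (0,1)$ of the elementary binary scoring rules obtained from $\myS^P_{z, \hsp c}$ by freezing $z$ and treating $F(z)$ and $\one\{y \leq z\}$ as the arguments. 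The left-continuity and boundary-limit hypotheses ensure that this representation extends to the possibly unbounded scores permitted in the statement. Averaging the cross-sectional optima over the mixing measure via Tonelli then delivers part iii).

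Part ii) is structurally parallel via representation \eqref{eq:CRPS3}. For each $\alpha \in (0,1)$, $\bhF^{-1}(\alpha)$ coincides with the isotonic quantile regression of $\by$ at level $\alpha$ onto $I^n_{\uparrow, \hsp \bx}$, as can be read off the min--max formula applied to the quantile level sets. Pointwise optimality against every elementary quantile score $\myS^Q_{\alpha, \hsp \theta}$ is immediate, and the extension to any consistent left-continuous $\mys_\alpha$ follows from the analogous mixture representation of consistent scoring functions for the $\alpha$-quantile \citep{Ehm2016}, combined with Tonelli. Part i) is then a direct Fubini argument: the double integrals in \eqref{eq:CRPS_Q} and \eqref{eq:CRPS_P} are superpositions of elementary scores, and parts ii) and iii) give the required pointwise optimality of $\bhF$ at almost every $(\alpha,\theta)$ or $(z,c)$.

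The main obstacle, as I see it, lies in the universality step within parts ii) and iii), namely the passage from optimality against the elementary or extremal scoring functions to optimality against every member of the associated mixture class. This requires a careful invocation of the mixture representations under the left-continuity assumption, and, for part iii), careful handling of the permitted values $\mys(0,1) = -\infty$ or $\mys(1,0) = -\infty$, so that the Tonelli exchange and any limiting arguments remain compatible with the integrability properties of the mixing measure and with the fact that the IDR cross-section $\bhF(z)$ may place mass at $0$ or $1$. Once that is dispatched, assembling the pointwise cross-sectional minimizers into the single object $\bhF$ requires no further compatibility check, since the explicit formula \eqref{eq:solution} does this by construction.
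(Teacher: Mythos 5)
Your overall architecture --- decoupling the CRPS into cross-sections at each threshold $z$ and each quantile level $\alpha$, establishing optimality against the elementary scores, upgrading to general scoring rules via the Schervish and \citet{Ehm2016} mixture representations, and applying Tonelli for part i) --- is the same as the paper's. But two steps are asserted where the argument actually has to do work. The smaller one is in part iii): the averaging property of isotonic regression on its level sets gives threshold calibration, but it does \emph{not} by itself give optimality against every elementary score $\myS^P_{z,\hsp c}$ (the constant vector equal to the overall mean of the indicators also averages correctly on its single level set and is isotonic, yet is not optimal). What is needed is the stronger fact that the $L^2$ isotonic regression of the indicator vector minimizes $\frac{1}{n}\sum_i \myS^P_{z,\hsp c}(F_i,y_i)$ over the monotone cone \emph{simultaneously for all} $c\in(0,1)$; the paper imports this from \citet{Jordan2019}. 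This is citable and hence repairable, but it is not a consequence of the averaging property alone.

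The genuine gap is in part ii). You write that $\bhF^{-1}(\alpha)$ ``coincides with the isotonic quantile regression \ldots as can be read off the min--max formula'' and that ``assembling the pointwise cross-sectional minimizers into the single object $\bhF$ requires no further compatibility check.'' That compatibility check is precisely the bulk of the paper's proof. The isotonic quantile regression problem at level $\alpha$ does not have a unique minimizer, and it is not immediate that the generalized inverse at level $\alpha$ of the min--max CDF solution is among the minimizers, nor that the $\alpha$-wise minimizers assemble into valid quantile functions of a single element of $\cP^n_{\uparrow,\hsp\bx}$. The paper argues in the opposite direction: it constructs, for each $\alpha$, the simultaneous minimizer $\bhq(\alpha)$ of the elementary quantile losses (a max--min of lower $\alpha$-sample quantiles), verifies that $\alpha\mapsto\hat q_i(\alpha)$ is increasing and left-continuous --- exactly the check you dismiss --- so that $\bhq^{-1}\in\cP^n_{\uparrow,\hsp\bx}$, shows via \eqref{eq:CRPS3} that $\bhq^{-1}$ minimizes the CRPS, and only then concludes $\bhq^{-1}=\bhF$ from the uniqueness in Theorem \ref{thm:existence}. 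Without an argument of this kind, part ii), and with it the \eqref{eq:CRPS_Q} half of part i), is unproven. Conversely, the step you flag as the main obstacle --- passing from elementary to mixture scores --- is comparatively routine once simultaneous optimality in $c$ (respectively $\theta$) is in hand.
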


The quantile weighted and threshold weighted versions of the CRPS
studied by \citet{Gneiting2011b} arise from \eqref{eq:CRPS_Q} and
\eqref{eq:CRPS_P} with $H = G_0 \otimes \lambda$ and $M = \lambda
\otimes G_1$, where $\lambda$ denotes the Lebesgue measure, and $G_0$
and $G_1$ are $\sigma$-finite Borel measures on $(0,1)$ and $\real$,
respectively.  If $G_0$ and $G_1$ are Lebesgue measures, we recover
the mixture representations \eqref{eq:CRPS3} and \eqref{eq:CRPS4} of
the CRPS.  By results of \citet{Ehm2016}, if $H$ is concentrated on
$\{ \alpha \} \times \real$ and $M$ is concentrated on $\{ z \} \times
(0,1)$, these representations cover essentially all proper scoring
rules that depend on the predictive distribution $F$ via
$F^{-1}(\alpha)$ or $F(z)$ only, yielding universal optimality in
statements in parts ii) and iii) of Theorem \ref{thm:universality}.

In particular, as a special case of \eqref{eq:iso-quant}, the IDR
solution is a minimizer of the quantile loss under the asymmetric
piecewise linear or pinball function \eqref{eq:QS} that lies at the
heart of quantile regression \citep{Koenker2005}.  Consequently, as
the mixture representation \eqref{eq:CRPS2} of the CRPS may suggest,
IDR nests classical nonparametric isotonic quantile regression as
introduced and studied by \citet{Robertson1975} and
\citet{Casady1976}.  In other words, part ii) of Theorem
\ref{thm:universality} demonstrates that, if we (hypothetically)
perform nonparametric isotonic quantile regression at every level
$\alpha \in (0,1)$ and piece the conditional quantile functions
together, we recover the IDR solution.  However, the IDR solution is
readily computable (Section \ref{subsec:computing}), without invoking
approximations or truncations, unlike brute force approaches to
simultaneous quantile regressions.  Loss functions of the form
\eqref{eq:iso-quant} also include the interval score
\citep[eq.~(43)]{Winkler1972, Gneiting2007a}, which constitutes the
most used proper performance measure for interval forecasts.

In the special case of a binary response variable, we see from iii)
and \eqref{eq:z_optimal} that the IDR solution is an $\myS$-based
isotonic regression under just any applicable proper scoring rule
$\myS$.  Furthermore, threshold calibration is the strongest possible
notion of calibration in this setting \citep[Theorem
  2.11]{Gneiting2013}, so the IDR solution is universal in every
regard.  In the further special case of a total order on the covariate
space, the IDR and pool adjacent violators (PAV) algorithm solutions
coincide, and the statement in iii) is essentially equivalent to
Theorem 1.12 of \citet{Barlow1972}.  In particular, the IDR or PAV
solution yields both the nonparametric maximum likelihood estimate and
the nonparametric least squares estimate under the constraint of
isotonicity.  The latter suggests a computational implementation via
quadratic programming, to which we tend now.

\subsection{Computational aspects}  \label{subsec:computing} 

The key observation towards a computational implementation is the
aforementioned special case of \eqref{eq:z_optimal}, according to
which the IDR solution $\bhF \in \cP^n$ of $\by \in \real^n$ on $\bx
\in \cX^n$ satisfies
\begin{equation}  \label{eq:QP} 
\bhF(z) = \arg \, \min_{\eta \in [0, 1]^n_{\downarrow, \hsp \bx}} 
\sum_{i=1}^n \left( \eta_i - \one \{ y_i \leq z \} \right)^2
\end{equation}
at every threshold value $z \in \real$.  In this light, the
computation of the IDR CDF at any fixed threshold reduces to a
quadratic programming problem.  The above target function is constant
in between the unique values of $y_1, \ldots, y_n$, say $\tilde{y}_1 <
\cdots < \tilde{y}_m$, and so it suffices to estimate the CDFs at
these points only.  {\green In contrast, exact implementations based
  on quantiles would need to consider all levels of the form $i/j$
  with integers $1 \leq i < j \leq n$, which is computationally
  prohibitive.  In the threshold-based approach, the} overall cost
depends on the quadratic programming solver applied, and the
computation becomes much faster if recursive relations between
consecutive conditional CDFs $\bhF(\tilde{y}_k)$ and
$\bhF(\tilde{y}_{k-1})$ are taken advantage of.  In the case of a
total order, \citet{Henzi2020} describe a recursive adaptation of the
PAV algorithm to IDR that considerably reduces the computation time
as compared to a naive implementation which does not take into account
recursive relations.
Under general partial orders, active set methods for solutions to the
quadratic programming problem \eqref{eq:QP} have been discussed by
\citet{deLeeuw2009}.  In our implementation, we use the powerful
quadratic programming solver OSQP \citep{Stellato2020} as supplied by
the package {\tt osqp} in the statistical programming environment
\textsf{R} \citep{Stellato2018, R}, which can be warm-started
efficiently by taking $\bhF(\tilde{y}_{k-1})$ as a starting point for
the computation of $\bhF(\tilde{y}_{k})$.

Clearly, a challenge in the computational implementation of IDR with
general partial orders is that the number of variables in the
quadratic programming problem \eqref{eq:QP} grows at a rate of
$\cO(n)$.  As a remedy, we propose subsample aggregation, much in the
spirit of random forests that rely on bootstrap aggregated (bagged)
classification and regression trees \citep{Breiman1996, Breiman2001}.
It was observed early on that random forests generate conditional
predictive distributions \citep{Hothorn2004, Meinshausen2006}, and
recent applications include the statistical postprocessing of ensemble
weather forecasts \citep{Taillardat2016, Schlosser2019,
  Taillardat2019}.  \citet{Buhlmann2002} and \citet{Buja2006} argue
forcefully that subsample aggregation (subagging) tends to be equally
effective as bagging, but at considerably lower computational cost.

\begin{figure}[t]
\centering
\includegraphics[width = \textwidth]{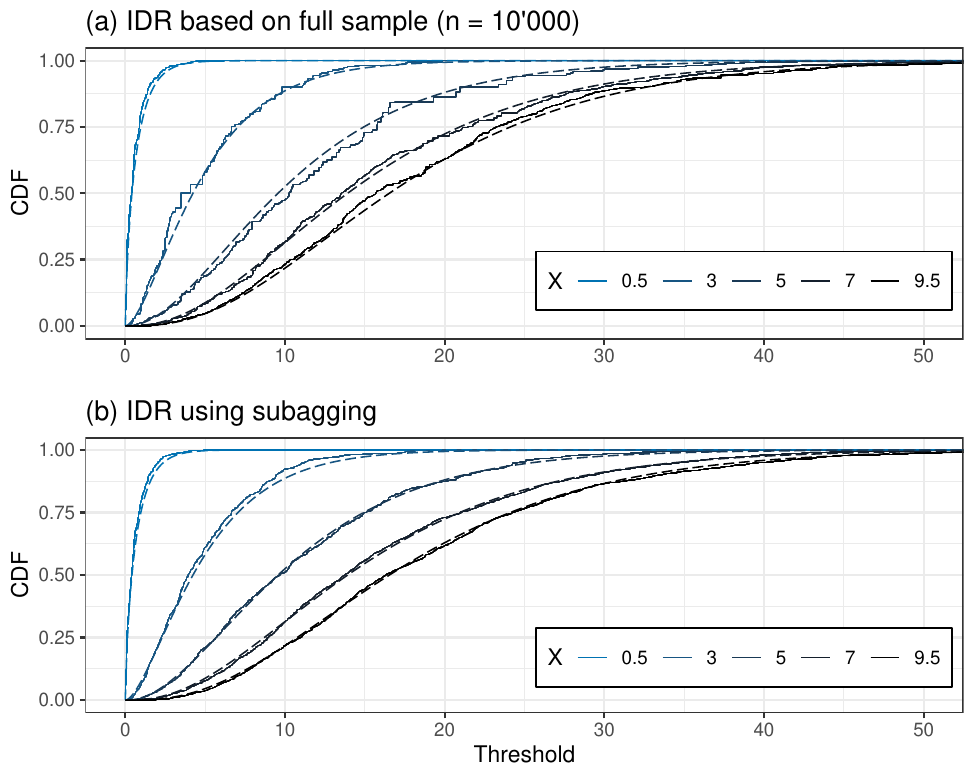}
\caption{Simulation example for a sample of size $n = 10\,000$ from
  the distribution in \eqref{eq:sim}. The true conditional CDFs
  (smooth dashed graphs) are compared to IDR estimates (step
  functions) based on (a) the full training sample of size $n =
  10\,000$ and (b) linear aggregation of IDR estimates on 100
  subsamples of size $1\,000$ each.  \label{fig:subagging}}
\end{figure}

In view of the superlinear computational costs of IDR, smart uses of
subsample aggregation yield major efficiency gains, taking into
account that the estimation on different subsamples can be performed
in parallel.  Isotonicity is preserved under linear aggregation, and
the aggregated conditional CDFs can be inverted to generate isotonic
conditional quantile functions, with the further benefit of smoother
estimates in continuous settings.  A detailed investigation of optimal
subsample aggregation for IDR is a topic for future research. For
illustration, Figure \ref{fig:subagging} returns to the simulation
example in Figure \ref{fig:sim}, but now with a much larger training
sample of size $n = 10\,000$ from the distribution in \eqref{eq:sim}.
Linear aggregation based on 100 subsamples (drawn without replacement)
of size $n = 1\,000$ each is superior to the brute force approach on
the full training set in terms of estimation accuracy.  The
computation on the full dataset for this simulation example takes 11.7
seconds for the naive implementation, but only 1.1 seconds for the
sequential algorithm of \citet{Henzi2020}.  Subagging gives
computation times of 9.9 and 2.5 seconds, respectively, or 1.8 and 0.5
seconds when parallelized over eight cores.\footnote{With Intel(R)
  Xeon(R) E5-2630 v4 2.20GHz CPUs, in \textsf{R} \citep{R}, using the
  {\tt doParallel} package for parallelization.  Times reported are
  averages over 100 replicates.}

\subsection{{Consistency}} \label{subsec:consistency}

We proceed to prove uniform consistency of the IDR estimator.  While
strong consistency of nonparametric isotonic quantile regression for
single quantiles was proved decades ago \citep{Robertson1975,
  Casady1976}, uniform consistency and rates of convergence for the
IDR estimator have been established only recently, and exclusively in
the case of a total order, see \citet[Theorem 1]{ElBarmi2005} and
\citet[Theorem 3.3]{Moesching2020}.

For $x \in \cX$ and $y \in \real$, let $\hat{F}_x(y)$ denote the IDR
estimate based on fixed or random pairs $(X_1, Y_1), \ldots, (X_n,
Y_n)$.  As introduced thus far, the IDR solution $\bhF = (\hat{F}_1,
\ldots, \hat{F}_n)$ is defined at the covariate values $X_1, \ldots,
X_n \in \cX$ only.  For general $x \in \cX$, we merely assume that
$\hat{F}_x(y)$ is some value in between the bounds given by
\begin{equation}  \label{eq:estbounds}
\max_{i \in s(x)} \hat{F}_i(y) \leq \hat{F}_x(y) \leq \min_{i \in p(x)} \hat{F}_i(y).
\end{equation}
Here, we define the sets of the indices of {\em direct predecessors}\/
and {\em direct successors}\/ of $x\in \cX$ among the covariate values
as
\begin{equation}\label{eq:predec}
p(x) = \{ i \in \{ 1, \ldots, n \} : 
          X_i \preceq X_j \preceq x \implies X_j = X_i, \, j = 1, \ldots, n \}
\end{equation}
and 
\begin{equation}\label{eq:suc}
s(x) = \{ i \in \{ 1, \ldots, n \} : 
          x \preceq X_j \preceq X_i \implies X_j = X_i, \, j = 1, \ldots, n \},
\end{equation}
respectively.  

In Appendix \ref{app:consistency} we establish
the following consistency theorem, which covers key examples of
partial orders and is based on strictly weaker assumptions than the
results of \citet{Moesching2020}.  However, in contrast to their work,
we do not provide rates of convergence.  The choice $\cX = [0,1]^d$
for the covariate space merely serves to simplify the presentation: As
IDR is invariant under strictly isotonic transformations, any
covariate vector $X = (X_1, \ldots, X_d) \in \real^d$ can be
transformed to have support in $[0,1]^d$, and the componentwise
partial order can be replaced by any weaker preorder.  A key
assumption uses the concept of an \emph{antichain}\/ in a partially
ordered set $(\cS, \preceq)$, which is a subset $A \subseteq \cS$ that
does not admit comparisons, in the sense that $u \preceq v$ for $u, v
\in A$ implies $u = v$.  As we discuss subsequently, results of
\citet{Brightwell1992} imply that the respective distributional
condition is mild.

\begin{thm}[uniform consistency] \label{thm:consistency} 
Let\/ $\cX = [0,1]^d$ be endowed with the componentwise partial order
and the norm\/ $\| u \| = \max_{i = 1, \ldots, d} |u_i|$.  Let
further\/ $(X_{ni}, Y_{ni}) \in [0,1]^d \times \real$, $n \in
\mathbb{N}$, $i = 1, \ldots,n$, be a triangular array such that
$(X_{n1},Y_{n1}),\dots,$ $(X_{nn},Y_{nn})$ are independent and
identically distributed random vectors for each $n \in \mathbb{N}$,
and let\/ $S_n = \{ X_{n1}, \ldots, X_{nn} \}$.  Assume that
\begin{itemize}
\item[(i)] for all non-degenerate rectangles\/ $J \subseteq \cX$,
  there exists a constant $c_J > 0$ such that
  \[
 \# (S_n \cap J) \ge n c_J 
  \]
  with asymptotic probability one, i.e., if\/ $A_n$ denotes the event
  that\/ $\# (S_n \cap J) \ge n c_J$, then\/ $\myP(A_n) \to 1$ as\/ $n
  \to \infty$;
\item[(ii)] for some $\gamma \in (0,1)$,
  \[
 \max \{ \#A : \, A \subset S_n \mathrm{ \ is \ antichain} \} \le n^\gamma
  \]
  with asymptotic probability one.
\end{itemize}
\noindent Assume further that the true conditional CDFs\/ $F_x(y) =
\myP( Y_{ni} \leq y \mid X_{ni} = x)$ satisfy
\begin{itemize}
\item[(iii)] $F_x(y)$ is decreasing in\/ $x$ for all\/ $y \in \real$;
\item[(iv)] for every\/ $\eta > 0$, there exists\/ $r > 0$ such that
  \[
  \sup \{ |F_x(y) - F_{x'}(y)| : \, x, x' \in [0,1]^d, \| x - x' \|
  \leq r, \, y \in \real \} < \eta.
  \]
\end{itemize}
Then for every\/ $\epsilon > 0$ and\/ $\delta > 0$,
\begin{equation} \label{eq:consistency}
\lim_{n \to \infty} \myP \left( \sup_{x \in [\delta, 1-\delta]^d, \, y
  \in \real} |\hat{F}_x(y) - F_x(y)| \geq \epsilon \right) = 0.
\end{equation}
\end{thm}

Assumption (i) requires that the covariates are sufficiently dense in $\cX$, as
is satisfied under strictly positive Lebesgue densities on $\cX$.  In order to
derive rates of convergence, the size of the rectangles $J$ in (i)
would need to decrease with $n$, as in condition (A.2) of
\citet{Moesching2020}; we leave this type of extension as a direction
for future work.  Assumption (iii) is the basic model assumption of
IDR, while assumption (iv) requires uniform continuity of the
conditional distributions, which is weaker than H\"older continuity in
condition (A.1) of \citet{Moesching2020}.

Assumption (ii), which is always satisfied in the case of a total
order, calls for a more detailed discussion.  In words, the maximal
number of mutually incomparable elements needs to grow at a rate
slower than $n^{\gamma}$.  Evidently, the easier elements can be
ordered, the smaller the maximal antichain.  Consequently, Theorem
\ref{thm:consistency} continues to hold under the empirical stochastic
order and the empirical increasing convex order on the covariates
introduced in Section \ref{subsec:icx}, and indeed under any preorder
that is weaker than the componentwise order.  The key to understanding
the distributional implications of (ii) is Corollary 2 in
\citet{Brightwell1992}, which states that for a sequence of
independent random vectors from a uniform population on $[0,1]^d$ the
size of a maximal antichain grows at a rate of $n^{1-1/d}$; see also
the remark following the proof of Theorem \ref{thm:consistency} in
Appendix \ref{app:consistency}.

As comparability under the componentwise order is preserved under
monotonic transformations, \emph{any}\/ covariate vector $X \in
\real^d$ that can be obtained as a monotone transformation of a
uniform random vector of arbitrary dimension guarantees (ii).  This
includes, e.g., all Gaussian random vectors with nonnegative
correlation coefficients.  In this light, assumption (ii) is rather
weak, and well in line with the intuition that for multivariate
isotonic (distributional) regression to work well, there ought be at
least minor positive dependence between the covariates.  In the
context of our case study in Section \ref{sec:case_study}, high
positive correlations between the covariates are the rule, as
exemplified by Table 3 in \citet{Raftery2005}.

\subsection{Prediction}  \label{subsec:prediction}

As noted, the IDR solution $\bhF = (\hat{F}_1, \ldots, \hat{F}_n) \in
\cP^n_{\uparrow, \hsp \bx}$ is defined at the covariate values $x_1,
\ldots, x_n \in \cX$ only.  Generally, if a (not necessarily optimal)
distributional regression $\bF = (F_1, \ldots, F_n) \in
\cP^n_{\uparrow, \hsp \bx}$ is available, a key task in practice is to
make a prediction at a new covariate value $x \in \cX$ where $x
\not\in \{ x_1, \ldots, x_n \}$.  We denote the respective predictive
CDF by $F$.

In the specific case $\cX = \real$ of a single real-valued covariate
there is a simple way of doing this, as frequently implemented in
concert with the PAV algorithm.  For simplicity we suppose that $x_1 <
\cdots < x_n$.  If $x < x_1$ we may let $F = F_1$; if $x \in (x_i,
x_{i+1})$ for some $i \in \{ 1, \ldots, n - 1 \}$ we may interpolate
linearly, so that
\[
F(z) = \frac{x-x_i}{x_{i+1}-x_i} F_i(z) + \frac{x_{i+1}-x}{x_{i+1}-x_i} F_{i+1}(z)
\]
for $z \in \real$, and if $x > x_n$ we may set $F = F_n$.  However,
approaches that are based on interpolation do not extend to a generic
covariate space, which may or may not be equipped with a metric.

In contrast, the method we describe now, which generalizes a proposal
by \citet{Wilbur2005}, solely uses information supplied by the partial
order $\preceq$ on the covariate space $\cX$.  For a general covariate
value $x \in \cX$, the sets of the indices of direct predecessors and
direct successors among the covariate values $x_1, \ldots, x_n$ in the
training data is defined as at \eqref{eq:predec} and \eqref{eq:suc},
respectively with $X_1, \ldots, X_n$ replaced by $x_1, \ldots, x_n$.
If the covariate space $\cX$ is totally ordered, these sets contain at
most one element.  If the order is partial but not total, $p(x)$ and
$s(x)$ may, and frequently do, contain more than one element.
Assuming that $p(x)$ and $s(x)$ are non-empty, any predictive CDF $F$
that is consistent with $\bF$ must satisfy
\begin{equation}  \label{eq:predbounds}
\max_{i \in s(x)} F_i(z) \leq F(z) \leq \min_{i \in p(x)} F_i(z)
\end{equation}
at all threshold values $z \in \real$.  We now let $F$ be the
pointwise arithmetic average of these bounds, i.e.,
\begin{equation}  \label{eq:prediction} 
F(z) = \frac{1}{2} \left( \, \max_{i \in s(x)} F_i(z) + \min_{i \in p(x)} F_i(z) \right)
\end{equation}
for $z \in \real$.  If $s(x)$ is empty while $p(x)$ is non-empty, or
vice-versa, a natural choice, which we employ hereinafter, is to let
$F$ equal the available bound given by the non-empty set.  If $x$
is not comparable to any of $x_1, \ldots, x_n$ the training data lack
information about the conditional distribution at $x$, and a natural
approach, which we adopt and implement, is to set $F$ equal to the
empirical distribution of the response values $y_1, \ldots, y_n$.

The difference between the bounds (if any) in \eqref{eq:predbounds}
might be a useful measure of estimation uncertainty and could be
explored as a promising avenue towards the quantification of ambiguity
and generation of second-order probabilities \citep{Ellsberg1961,
  Seo2009}.  In the context of ensemble weather forecasts, the
assessment of ambiguity has been pioneered by \citet{Allen2012}.
Interesting links arise when the envelope in \eqref{eq:predbounds} is
interpreted in the spirit of randomized predictive systems and
conformal estimates as studied by \cite{Vovk2019}; compare, e.g.,
their Figure 5 with our Figure \ref{fig:example_CDFs}b below.

\section{Partial orders}  \label{sec:orders}

The choice of a sufficiently informative partial order on the
covariate space is critical to any successful application of IDR.  In
the extreme case of distinct, totally ordered covariate values $x_1,
\ldots, x_n \in \cX$ and a perfect monotonic relationship to the
response values $y_1, \ldots, y_n$, the IDR distribution associated
with $x_i$ is simply the point measure in $y_i$, for $i = 1, \ldots,
n$.  The same happens in the other extreme, when there are no order
relations at all.  Hence, the partial order serves to regularize the
IDR solution. 

Thus far, we have simply assumed that the covariate space $\cX$ is
equipped with a partial order $\preceq$, without specifying how the
order might be defined.  If $\cX \subseteq \real^d$, the usual
componentwise order will be suitable in many applications, and we
investigate it in Section \ref{subsec:cw}.  For covariates that are
ordinal and admit a ranking in terms of importance, a
lexicographic order may be suitable.

If groups of covariates are exchangeable, as in our case study on
quantitative precipitation forecasts, other types of order relations
need to be considered.  In Sections \ref{subsec:st} and
\ref{subsec:icx} we study relations that are tailored to this setting,
namely, the empirical stochastic order and empirical increasing convex
order.  Proofs are deferred to Appendix \ref{app:orders}.

\subsection{Componentwise order}  \label{subsec:cw}

Let $x = (x_1, \ldots, x_d)$ and $x' = (x'_1, \ldots,
x'_d)$ denote elements of the covariate space $\real^d$.  The most
commonly used partial order in multivariate isotonic regression is the
\emph{componentwise order}\/ defined by
\[
x \preceq x' \; \iff \; x_i \leq x'_i \, \textrm{ for } \, i = 1, \ldots, d.
\]
This order becomes weaker as the dimension $d$ of the covariate space
increases: If $\tilde{x} = (x_1, \ldots, x_d, x_{d+1})$ and $\tilde{x}' =
(x'_1, \ldots, x'_d, x'_{d+1})$ then $x \preceq x'$ is a
necessary condition for $\tilde{x} \preceq \tilde{x}'$.  The following result is
an immediate consequence of this observation and the structure of the
optimization problem in Definition \ref{def:regression}.

\begin{prop}  \label{prop:cw}
Let\/ $\bx = (x_1, \ldots, x_n)$ and\/ $\bx^* = (x_1^*, \ldots, x_n^*)$
have components\/ $x_i = (x_{i1}, \ldots, x_{id}) \in \real^d$ and\/
$x_i^* = (x_{i1}, \ldots, x_{id}, x_{i,d+1}) \in \real^{d+1}$ for\/ $i
= 1, \ldots, n$, and let\/ $\myS$ be a proper scoring rule.

Then if\/ $\real^d$ and\/ $\real^{d+1}$ are equipped with the
componentwise partial order, and\/ $\bhF$ and\/ $\bhF^*$ denote
$\myS$-based isotonic regressions of\/ $\by$ on\/ $\bx$ and\/ $\bx^*$,
respectively, it is true that
\[
\ell_{\myS}(\bhF^*) \leq \ell_{\myS}(\bhF).
\]
\end{prop}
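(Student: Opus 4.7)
The plan is to show that enlarging the covariate by an extra coordinate only weakens the componentwise partial order, which enlarges the feasible set in the optimization problem of Definition~\ref{def:regression}, hence cannot increase the optimal empirical loss. The whole argument reduces to a tracking of set inclusions; there is no real analytic difficulty.

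More concretely, I would first observe the elementary fact highlighted by the authors: for the componentwise order, $x_i^* \preceq x_j^*$ in $\real^{d+1}$ requires the coordinatewise inequality in all $d+1$ components, so in particular in the first $d$ components, which yields $x_i \preceq x_j$ in $\real^d$. Thus the set of index pairs $(i,j)$ for which an order relation is imposed by $\bx^*$ is a subset of the corresponding set for $\bx$. Plugging this into the definition of $\cP^n_{\uparrow,\hsp \cdot}$ gives the feasibility inclusion
\[
\cP^n_{\uparrow, \hsp \bx} \;\subseteq\; \cP^n_{\uparrow, \hsp \bx^*},
\]
because any tuple satisfying all of the (more numerous) $\bx$-constraints automatically satisfies the (fewer) $\bx^*$-constraints.

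Since $\bhF$ minimizes $\ell_\myS$ over $\cP^n_{\uparrow, \hsp \bx}$ and $\bhF^*$ minimizes the same functional over the larger set $\cP^n_{\uparrow, \hsp \bx^*}$, the inequality $\ell_\myS(\bhF^*) \leq \ell_\myS(\bhF)$ follows immediately, as $\bhF$ is itself feasible for the larger problem. The only step deserving care is the direction of the inclusion: one must not confuse ``more comparability relations among covariates'' with ``more predictive distributions allowed'' — the implication runs the opposite way. Beyond this bookkeeping, no properties of $\myS$ beyond its role as the empirical loss in Definition~\ref{def:regression} are needed, and no uniqueness or existence assumptions beyond the stated hypothesis of the proposition are required.
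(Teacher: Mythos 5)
Your proof is correct and is essentially the argument the paper intends: the authors state the proposition as ``an immediate consequence'' of the observation that $x_i^* \preceq x_j^*$ in $\real^{d+1}$ implies $x_i \preceq x_j$ in $\real^d$, which is precisely your feasibility inclusion $\cP^n_{\uparrow, \hsp \bx} \subseteq \cP^n_{\uparrow, \hsp \bx^*}$ followed by comparing minima over nested sets. You have simply spelled out the bookkeeping (including getting the direction of the inclusion right) that the paper leaves implicit.
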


In simple words, under the componentwise partial order, the inclusion
of further covariates can only improve the in-sample fit.  This
behaviour resembles linear regression, where the addition of
covariates can only improve the (unadjusted) R-square.

\subsection{Empirical stochastic order}  \label{subsec:st}

We now define a relation that is based on stochastic dominance and
invariant under permutation.

\begin{defn} 
Let $x = (x_1, \ldots, x_d)$ and $x' = (x_1', \ldots,
x_d')$ denote elements of $\real^d$.  Then $x$ is smaller than or
equal to $x'$ in {\em empirical stochastic order}, for short $x \st
x'$, if the empirical distribution of $x_1, \ldots, x_d$ is
smaller than the empirical distribution of $x_1', \ldots, x_d'$ in the usual
stochastic order.
\end{defn}

This relation is tailored to groups of exchangeable, real-valued
covariates.  The following results summarizes its properties and
compares to the componentwise order, which we denote by $\preceq$.

\begin{prop}  \label{prop:st}
Let\/ $x = (x_1, \ldots, x_d)$ and\/ $x' = (x_1', \ldots, x_d')$ denote
elements of\/ $\real^d$ with order statistics\/ $x_{(1)} \leq \cdots
\leq x_{(d)}$ and\/ $x'_{(1)} \leq \cdots \leq x'_{(d)}$.
\begin{itemize}
\item[i)] The relation\/ $x \st x'$ is equivalent to\/ $x_{(i)}
  \leq x'_{(i)}$ for $i = 1, \ldots, d$.
\item[ii)] If\/ $x \preceq x'$ then\/ $x \st x'$.
\item[iii)] If\/ $x \st x'$ and\/ $x$ and\/ $x'$ are comparable in the
  componentwise partial order, then\/ $x \preceq x'$.
\item[iv)] If\/ $x \st x'$ and\/ $x' \st x$ then\/ $x$ and\/ $x'$ are
  permutations of each other.  Consequently, the relation\/ $\st$
  defines a partial order on\/ $\real^d_{\uparrow}$.
\end{itemize}
\end{prop}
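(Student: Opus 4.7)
The plan is to reduce the whole proposition to the classical characterization of stochastic dominance between empirical measures in terms of order statistics, and then to leverage that characterization across the four parts.

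For (i), I would unpack the definition: $x \st x'$ says $\hat{F}_x(t) \geq \hat{F}_{x'}(t)$ for every $t \in \real$, where $\hat{F}_x$ denotes the empirical CDF of $x_1, \ldots, x_d$. The forward implication follows by specializing to $t = x'_{(i)}$: since $\hat{F}_{x'}(x'_{(i)}) \geq i/d$, stochastic dominance forces $\hat{F}_x(x'_{(i)}) \geq i/d$, i.e., at least $i$ entries of $x$ lie at or below $x'_{(i)}$, which is exactly $x_{(i)} \leq x'_{(i)}$. Conversely, assuming $x_{(i)} \leq x'_{(i)}$ for all $i$ and fixing $t \in \real$, whenever $x'_{(j)} \leq t$ one has $x_{(j)} \leq x'_{(j)} \leq t$; counting indices yields $\hat{F}_x(t) \geq \hat{F}_{x'}(t)$.

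With (i) in hand, part (ii) becomes a one-line observation: $x_j \leq x'_j$ for every $j$ implies $x_{(i)} \leq x'_{(i)}$ for every $i$, since the $i$ smallest entries of $x$ can be paired injectively with entries of $x'$ that dominate them, forcing the $i$-th smallest entry of $x'$ to majorize $x_{(i)}$. Part (iii) then proceeds by a short case distinction. Suppose $x \st x'$ and $x, x'$ are comparable componentwise; if $x \preceq x'$ we are done, and otherwise $x' \preceq x$. In the latter case (ii) applied in the reverse direction gives $x'_{(i)} \leq x_{(i)}$, which combined with $x_{(i)} \leq x'_{(i)}$ from (i) forces $x_{(i)} = x'_{(i)}$ for all $i$. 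But $x' \preceq x$ componentwise together with $\sum_j x_j = \sum_j x'_j$ then forces $x_j = x'_j$ entrywise, so $x \preceq x'$ holds trivially.

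For (iv), $x \st x'$ and $x' \st x$ yield $x_{(i)} = x'_{(i)}$ for all $i$ by applying (i) in both directions, which is precisely the definition of being permutations of each other; on $\real^d_{\uparrow}$ this forces $x = x'$, giving antisymmetry. Reflexivity and transitivity of $\st$ descend immediately from the same properties of the usual stochastic order on probability measures. The only place where care is genuinely needed is part (iii), keeping the directions of comparability and stochastic dominance straight and noting that componentwise comparability together with empirical stochastic order in one direction cannot coexist with strict componentwise dominance in the opposite direction; everything else reduces mechanically to the order-statistic characterization established in part (i).
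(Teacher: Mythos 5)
Your proposal is correct and takes essentially the same route as the paper: everything hinges on the order-statistic characterization established in part (i) (proved by the same two specializations, $t = x'_{(i)}$ for one direction and index counting for the other), and parts (ii)--(iv) are then derived from it, with (iii) resolved by the same observation that componentwise comparability plus equality of order statistics forces $x = x'$. The only divergence is in (ii), where the paper runs an explicit exchange/reordering argument while you invoke a pairing argument; as phrased, pairing the $i$ \emph{smallest} entries of $x$ with dominating entries of $x'$ does not directly bound $x'_{(i)}$ from below --- one should instead note that any entry $x'_j < x_{(i)}$ satisfies $x_j \leq x'_j < x_{(i)}$, so at most $i-1$ entries of $x'$ lie below $x_{(i)}$ --- but this is a standard fact and the repair is immediate.
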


In a nutshell, the empirical stochastic order is equivalent to the
componentwise order on the sorted elements, and this relation is
weaker than the componentwise order.  However, unlike the
componentwise order, the empirical stochastic order does not
degenerate as further covariates are added.  To the contrary,
empirical distributions of larger numbers of exchangeable variables
become more informative and more easily comparable.

\subsection{Empirical increasing convex order}  \label{subsec:icx} 

In applications, the empirical stochastic order might be too strong,
in the sense that it does not generate sufficiently informative
constraints. In this light, we now define a weaker partial order on
$\real^d_{\uparrow}$, which also is based on a partial order for
probability measures.  Specifically, let $X$ and $X'$ be random
variables with CDFs $F$ and $F'$.  Then $F$ is smaller than $F'$ in
increasing convex order if $\myE(\phi(X)) \leq \myE(\phi(X'))$ for all
increasing convex functions $\phi$ such that the expectations exist
\citep[Section 4.A.1]{Shaked2007}.

\begin{defn} 
Let $x = (x_{1}, \ldots, x_{d})$ and $x' = (x_{1}', \ldots,
x_{d}')$ denote elements of $\real^d$.  Then $x$ is smaller than or
equal to $x'$ in {\em empirical increasing convex order}, for short
$x \icx x'$, if the empirical distribution of $x_{1}, \ldots,
x_{d}$ is smaller than the empirical distribution of $x_{1}', \ldots,
x_{d}'$ in increasing convex order.
\end{defn}

This notion provides another meaningful relation for groups of
exchangeable covariates.  The following result summarizes its
properties and relates it to the empirical stochastic order.

\begin{prop}  \label{prop:icx}
Let\/ $x = (x_1, \ldots, x_d)$ and\/ $x' = (x_1', \ldots, x_d')$ denote
elements of\/ $\real^d$ with order statistics\/ $x_{(1)} \leq \cdots
\leq x_{(d)}$ and\/ $x_{(1)}' \leq \cdots \leq x_{(d)}'$.
\begin{itemize}
\item[i)] The relation\/ $x \icx x'$ is equivalent to
\[
\sum_{i = j}^d x_{(i)} \leq \sum_{i = j}^d x_{(i)}' \; \textrm{ for } \; j = 1, \ldots, d.
\]
\item[ii)] If\/ $x \st x'$ then\/ $x \icx x'$.
\item[iii)] If\/ $x \icx x'$ then
  \[
  \frac{1}{d} \sum_{i=1}^d x_i  + \frac{d-1}{2(d+1)} \hsp \hsp g(x) \ \leq \
  \frac{1}{d} \sum_{i=1}^d x_i'  + \frac{d-1}{2(d+1)} \hsp \hsp g(x'), 
  \]
  where $g$ is the Gini mean difference, 
  \begin{equation}  \label{eq:Gini} 
  g(x) = \frac{1}{d(d-1)} \sum_{i, j = 1}^d |x_i - x_j|.
  \end{equation}
\item[iv)] If\/ $x \icx x'$ and\/ $x' \icx x$ then\/ $x$ and\/ $x'$
  are permutations of each other.  Consequently, the relation\/ $\icx$
  defines a partial order on\/ $\real^d_{\uparrow}$.
\end{itemize}
\end{prop}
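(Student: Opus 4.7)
The plan is to prove the four parts sequentially, with part~(i) doing the heavy lifting and the rest following quickly.

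For part~(i), I would begin with the standard characterization of increasing convex order for distributions with finite support: $F \icx G$ if and only if $\int (z-t)_+ \diff F(z) \leq \int (z-t)_+ \diff G(z)$ for every $t \in \real$ (Shaked and Shanthikumar, Theorem~4.A.2). Applied to the empirical measures of $x$ and $x'$, this reduces $x \icx x'$ to the pointwise inequality $h_x(t) \leq h_{x'}(t)$ for all $t$, where $h_x(t) := \sum_{i=1}^d (x_i - t)_+$. The key identity is the max-of-affine representation
\[
h_x(t) = \max_{0 \leq k \leq d} \bigl[\sigma_{d-k+1}(x) - k t\bigr], \qquad \sigma_j(x) := \sum_{i=j}^d x_{(i)},
\]
which follows because $\sum_i (x_i - t)_+ = \max_{S \subseteq \{1,\ldots,d\}} \sum_{i \in S}(x_i - t)$ and, for fixed cardinality $|S| = k$, the subset sum is maximized by taking the $k$ largest components. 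Conjugating, one recovers $\sigma_j(x) = \sup_{t \in \real} [h_x(t) + (d-j+1)t]$. The forward implication $(\Leftarrow)$ in~(i) is then immediate by taking the max over $k$ on both sides, and $(\Rightarrow)$ by taking the supremum over $t$ of $h_x(t)+(d-j+1)t \leq h_{x'}(t)+(d-j+1)t$.

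Part~(ii) is a one-line consequence of Proposition~\ref{prop:st}(i): under $x \st x'$, each $x_{(i)} \leq x'_{(i)}$, so the partial sums $\sigma_j(x) \leq \sigma_j(x')$, and~(i) concludes.

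For part~(iii), I would first expand the Gini mean difference using sorted components, $g(x) = \frac{2}{d(d-1)} \sum_{i=1}^d (2i-d-1) x_{(i)}$, and substitute into the left-hand side to obtain the clean identity
\[
\frac{1}{d} \sum_{i=1}^d x_i + \frac{d-1}{2(d+1)} g(x) = \frac{2}{d(d+1)} \sum_{i=1}^d i \, x_{(i)}.
\]
Swapping the order of summation gives $\sum_{i=1}^d i \, x_{(i)} = \sum_{j=1}^d \sigma_j(x)$, so the desired inequality is obtained by simply summing the $d$ inequalities $\sigma_j(x) \leq \sigma_j(x')$ provided by part~(i).

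Part~(iv) then follows by a short downward induction: from $\sigma_j(x) = \sigma_j(x')$ for all $j$, the case $j=d$ yields $x_{(d)} = x'_{(d)}$, and subtracting consecutive identities $\sigma_j - \sigma_{j+1}$ gives $x_{(j)} = x'_{(j)}$ for every $j$, so $x$ and $x'$ are permutations of one another. Combined with reflexivity and transitivity inherited from the increasing convex order on probability measures, this yields a partial order on $\real^d_{\uparrow}$. The main obstacle is the characterization in~(i); once the max-of-affine identity for $h_x$ is in hand, everything else is essentially bookkeeping.
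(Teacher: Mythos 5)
Your proof is correct and follows essentially the same route as the paper: everything is funneled through the partial-sum characterization in part~(i), parts~(ii) and~(iii) are the same computations (the paper also rewrites the Gini mean difference as a combination of $\sum_i x_i$ and $\sum_j \sigma_j(x)$ and then sums the inequalities from part~(i)), and your downward induction for part~(iv) supplies a detail the paper leaves implicit. The one difference is in part~(i), where the paper simply cites Theorem~4.A.3 of Shaked and Shanthikumar, while you derive the equivalence from the stop-loss characterization via the max-of-affine identity for $h_x(t)=\sum_i(x_i-t)_+$; this is a more self-contained and arguably more instructive argument. It contains one slip that must be fixed: the conjugate formula should read $\sigma_j(x)=\inf_{t\in\real}\,[\hsp h_x(t)+(d-j+1)t\hsp]$, not $\sup$ --- as written the supremum is $+\infty$ for every $j\le d$ (since $h_x(t)\to 0$ while $(d-j+1)t\to\infty$ as $t\to\infty$), so the implication from $h_x\le h_{x'}$ to the partial-sum inequalities would be vacuous. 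With $\inf$ the bound $h_x(t)+(d-j+1)t\ge\sigma_j(x)$ holds for all $t$ and is attained for any $t$ with $x_{(j-1)}\le t<x_{(j)}$, and passing the pointwise inequality $h_x\le h_{x'}$ to infima gives exactly what you need.
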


\begin{figure}
\center \includegraphics[width = \textwidth]{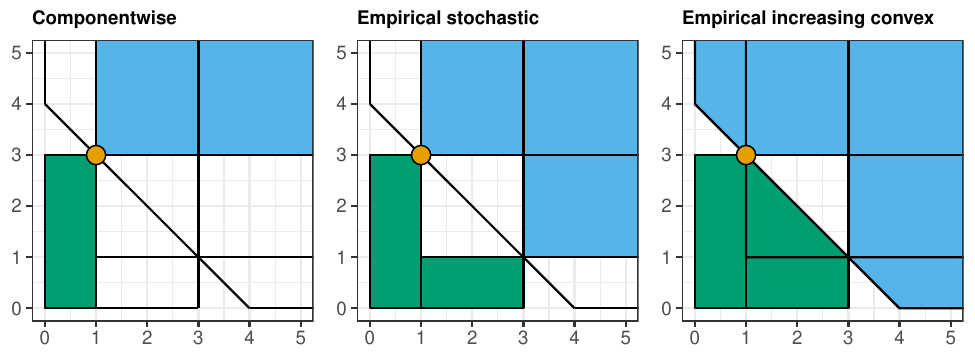}
\caption{Regions of smaller, greater and incomparable elements in the
  positive quadrant of $\real^2$, as compared to the point $(1, 3)$,
  for the (left) componentwise, (middle) empirical stochastic and
  (right) empirical increasing convex order.  Coloured areas below
  (above) of $(1,3)$ correspond to smaller (greater) elements, while
  blank areas contain elements incomparable to $(1,3)$ in the given
  partial order.  \label{fig:part_ord}}
\end{figure}

Figure \ref{fig:part_ord} illustrates the various types of relations
for points in the positive quadrant of $\real^2$.  As reflected by the
nested character of the regions, the componentwise order is stronger
than the empirical stochastic order, which in turn is stronger than
the empirical increasing convex order.  The latter is equivalent to
{\em weak majorization}\/ as studied by \citet{Marshall1979}.  In the
special case of vectors with non-negative entries, their Corollary C.5
implies that $x \in \real^d$ is dominated by $x' \in \real^d$ in
empirical increasing convex order if, and only if, it lies in the
convex hull of the points of the form $(\xi_1 x_{\pi(1)}', \ldots,
\xi_d \hsp \hsp x_{\pi(d)}')$, where $\pi$ is a permutation and $\xi_i
\in \{ 0, 1 \}$ for $i = 1, \ldots, d$.

\section{Simulation study}  \label{sec:simulation_study}

Since we view IDR primarily as a tool for prediction, we compare it to
other distributional regression methods in terms of predictive
performance on continuous and discrete, univariate simulation
examples, as measured by the CRPS.  However, as noted below and
formalized in Appendix \ref{app:CRPS.L2}, the
CRPS links asymptotically to $L_2$ estimation error, so under large
validation samples prediction and estimation are assessed
simultaneously.  A detailed comparative study on mixed
discrete-continuous data with a multivariate covariate vector is given
in the case study in the next section.

Here, our simulation scenarios build on the illustrating example in
the introduction.  Specifically, we draw a covariate $X \sim
\textrm{Unif}(0, 10)$ and then
\begin{align}
& Y_1 \mid X \sim \textrm{Gamma}(\textrm{shape} = \sqrt{X}, \, 
   \textrm{scale} = \min \{ \max \{ X, 1 \}, 6 \}), \label{eq:sim_standard} \\
& Y_2 \mid X = Y_1 \mid X + 10 \cdot \one \{ X \geq 5 \}, \label{eq:sim_discontinuous} \\ 
& Y_3 \mid X = Y_1 \mid X - 2 \cdot \one \{ X \geq 7 \}, \label{eq:sim_noniso} \\ 
& Y_4 \mid X \sim \textrm{Poisson}(\lambda = \min \{ \max \{ X, 1 \}, 6 \}) \}). \label{eq:sim_poisson} 
\end{align}
Under each scenario we generate 500 training sets of size $n = 500$,
$1\,000$, $2\,000$, and $4\,000$ each, fit distributional regression
models, and validate on a test set of size $m = 5\,000$.  For
comparison with IDR, we use a nonparametric kernel (or nearest
neighbor) smoothing technique \citep[NP;][]{Li2008}, semiparametric
quantile regression with monotone rearrangement (SQR;
\citealt{Koenker2005}; \citealt{Chernozhukov2010}), conditional
transformation models \citep[TRAM;][]{Hothorn2014}, and distributional
or quantile random forests (QRF; \citealt{Meinshausen2006};
\citealt{Athey2019}).  These methods have been chosen as they are not
subject to restrictive assumptions on the distribution of the response
variable and have well established and well documented implementations
in the statistical programming environment \textsf{R} \citep{R}.  We
also include the ideal forecast, i.e., the true conditional
distribution of the response given the covariate, in the comparison.

Implementation details for the various methods are given in Table
\ref{tab:sim_description} in Appendix \ref{app:tab_fig}. 
{\green Here we only note that QRF uses the {\tt
    grf} package \citep{Tibshirani2020} with a splitting rule that is
  tailored to quantiles \citep{Athey2019}.}  We see that, unlike IDR,
its competitors rely on manual intervention and tuning.  For example,
QRFs perform poorly under the default value of 5 for the tuning
parameter {\tt min.node.size}, which we have raised to 40.  Further
improvement may arise when tuning parameters, such as honesty fraction
and node size, are judiciously adjusted to the specific scenario and
training sample size at hand.  In contrast, IDR is entirely free of
implementation decisions, except for the subagging variant, \IDRsbg,
where we average predictions based on estimates on 100 subsamples of
size $n/2$ each.

Scenario \eqref{eq:sim_standard} is the same as in the introduction
and illustrated in Figure \ref{fig:sim}.  It has a smooth
covariate--response relationship, and {\green NP, SQR, and even the
  misspecified TRAM technique, which are tailored to this type of
  setting, outperform QRF and IDR.}  However, the assumption of
continuity in the response is crucial, as the results under the
discontinuous scenario \eqref{eq:sim_discontinuous} demonstrate, where
IDR and \IDRsbg \ perform best.  In the non-isotonic scenario
\eqref{eq:sim_noniso} IDR and \IDRsbg \ retain acceptable performance,
even though the key assumption is violated.  Not surprisingly, SQR faces
challenges in the Poisson scenario \eqref{eq:sim_poisson},
where the conditional quantile functions are piecewise constant, and
IDR is outperformed only by TRAM.  Throughout, the simplistic
subagging variant of IDR has slightly lower mean CRPS than the default
variant that is estimated on the full training set, and it would be
interesting to explore the relation to the super-efficiency phenomenon
described by \citet{Banerjee2019}.

\begin{table}
\caption{Mean CRPS in smooth \eqref{eq:sim_standard}, discontinuous
  \eqref{eq:sim_discontinuous}, non-isotonic \eqref{eq:sim_noniso},
  and discrete \eqref{eq:sim_poisson} simulation scenarios with
  training sets of size $n$. \label{tab:sim}}
{\footnotesize
\begin{tabular}{l|cccc|cccc}
\multicolumn{9}{c}{} \\
\toprule & \multicolumn{4}{c|}{Smooth \eqref{eq:sim_standard}}
         & \multicolumn{4}{c}{Discontinuous \eqref{eq:sim_discontinuous}} \\
\midrule
$n$ & 500 & $1\,000$ & $2\,000$ & $4\,000$ & 500 & $1\,000$ & $2\,000$ & $4\,000$ \\
\midrule
NP       & 3.561 & 3.542 & 3.532 & 3.525 & 3.614 & 3.582 & 3.562 & 3.549 \\ 
SQR      & 3.571 & 3.543 & 3.530 & 3.524 & 3.647 & 3.619 & 3.606 & 3.600 \\ 
TRAM     & 3.560 & 3.543 & 3.535 & 3.531 & 3.642 & 3.625 & 3.616 & 3.612 \\ 
QRF	 	 & 3.589 & 3.561 & 3.555 & 3.553 & 3.614 & 3.576 & 3.561 & 3.556 \\ 
IDR      & 3.604 & 3.568 & 3.548 & 3.535 & 3.628 & 3.581 & 3.555 & 3.540 \\ 
\IDRsbg  & 3.595 & 3.561 & 3.543 & 3.532 & 3.620 & 3.577 & 3.551 & 3.537 \\ 
Ideal    & 3.516 & 3.516 & 3.516 & 3.516 & 3.516 & 3.516 & 3.516 & 3.516 \\ 
\midrule
 & \multicolumn{4}{c|}{Non-isotonic \eqref{eq:sim_noniso}}
         & \multicolumn{4}{c}{Discrete \eqref{eq:sim_poisson}} \\
\midrule
$n$ & 500 & $1\,000$ & $2\,000$ & $4\,000$ & 500 & $1\,000$ & $2\,000$ & $4\,000$ \\
\midrule
NP       & 3.564 & 3.544 & 3.534 & 3.527 & 1.136 & 1.131 & 1.128 & 1.126 \\ 
SQR      & 3.574 & 3.546 & 3.533 & 3.527 & 1.129 & 1.121 & 1.116 & 1.114 \\ 
TRAM     & 3.566 & 3.549 & 3.543 & 3.539 & 1.115 & 1.110 & 1.107 & 1.106 \\ 
QRF      & 3.587 & 3.560 & 3.555 & 3.553 & 1.121 & 1.113 & 1.112 & 1.112 \\
IDR      & 3.605 & 3.569 & 3.549 & 3.536 & 1.130 & 1.119 & 1.113 & 1.109 \\ 
\IDRsbg  & 3.597 & 3.564 & 3.545 & 3.534 & 1.128 & 1.118 & 1.112 & 1.109 \\ 
Ideal    & 3.516 & 3.516 & 3.516 & 3.516 & 1.104 & 1.104 & 1.104 & 1.104 \\ 
\bottomrule
\end{tabular}
}
\end{table}

These results lend support to our belief that IDR can serve as a
universal benchmark in probabilistic forecasting and distributional
regression problems.  For sufficiently large training samples, IDR
offers competitive performance under any type of type of linearly
ordered outcome, without reliance on tuning parameters or other
implementation choices, except when subsampling is employed.

\section{Case study: Probabilistic quantitative precipitation forecasts}  \label{sec:case_study}

The past decades have witnessed tremendous progress in the science and
practice of weather prediction \citep{Bauer2015}.  Arguably, the most
radical innovation consists in the operational implementation of
ensemble systems and an accompanying culture change from point
forecasts to distributional forecasts \citep{Leutbecher2008}.  An
ensemble system comprises multiple runs of numerical weather
prediction (NWP) models, where the runs or members differ from each
other in initial conditions and numerical-physical representations of
atmospheric processes.

Ideally, one would like to interpret an ensemble forecast as a random
sample from the conditional distribution of future states of the
atmosphere.  However, this is rarely advisable in practice, as
ensemble forecasts are subject to biases and dispersion errors,
thereby calling for some form of statistical postprocessing
\citep{Gneiting2005a, Vannitsem2018}.  This is typically done by
fitting a distributional regression model, with the weather variable
of interest being the response variable, and the members of the
forecast ensemble constituting the covariates, and applying this model
to future NWP output, to obtain conditional predictive distributions
for future weather quantities.  State of the art techniques include
Bayesian Model Averaging \citep[BMA;][]{Raftery2005, Sloughter2007},
Ensemble Model Output Statistics \citep[EMOS;][]{Gneiting2005b,
  Scheuerer2014}, and Heteroscedastic Censored Logistic Regression
\citep[HCLR;][]{Messner2014}.

In this case study, we apply IDR to the statistical postprocessing of
ensemble forecasts of accumulated precipitation, a variable that is
notoriously difficult to handle, due to its mixed discrete-continuous
character, which requires both a point mass at zero and a right skewed
continuous component on the positive half-axis.  As competitors to
IDR, we implement the BMA technique of \citet{Sloughter2007}, the EMOS
method of \citet{Scheuerer2014}, and HCLR \citep{Messner2014}, which
are widely used parametric approaches that have been developed
specifically for the purposes of probabilistic quantitative
precipitation forecasting.  In contrast, IDR is a generic technique
and fully automatic, once the partial order on the covariate space has
been specified.

\subsection{Data}  \label{subsec:data}

The data in our case study comprise forecasts and observations of
24-hour accumulated precipitation from 06 January 2007 to 01 January
2017 at meteorological stations on airports in London, Brussels,
Zurich and Frankfurt.  As detailed in Table \ref{tab:data}, data
availability differs, and we remove days with missing entries station
by station, so that all types of forecasts for a given station are
trained and evaluated on the same data.  Both forecasts and
observations refer to the 24-hour period from 6:00 UTC to 6:00 UTC on
the following day.  The observations are in the unit of millimeter and
constitute the response variable in distributional regression.  They
are typically, but not always, reported in integer multiples of a
millimeter (mm).

\begin{table} 
\caption{Meteorological stations at airports, with International Air
  Transport Association (IATA) airport code, World Meteorological
  Organization (WMO) station ID, and data availability in days
  (years).  \label{tab:data}}
\small
\begin{tabular}{lccc}
\multicolumn{4}{c}{} \\
\toprule
& IATA Code & WMO ID &  Data Availability \\
\midrule
Brussels, Belgium & BRU & 06449 & 3406 (9.3) \\
Frankfurt, Germany & FRA & 10637 & 3617 (9.9) \\
London, UK & LHR & 03772 & 2256 (6.2) \\
Zurich, Switzerland & ZRH & 06670 & 3241 (8.9) \\
\bottomrule
\end{tabular}
\end{table}

As covariates, we use the 52 members of the leading NWP ensemble
operated by the European Centre for Medium-Range Weather Forecasts
\citep[ECMWF;][]{Molteni1996, Buizza2005}.  The ECMWF ensemble system
comprises a high-resolution member ($\xhres$), a control member at
lower resolution ($\xctr$) and 50 perturbed members ($x_1, \ldots,
x_{50}$) at the same lower resolution but with perturbed initial
conditions, and the perturbed members can be considered exchangeable
\citep{Leutbecher2019}.  To summarize, the covariate vector in
distributional regression is
\begin{equation}  \label{eq:x} 
x 
= \left( x_1, \ldots, x_{50}, \xctr, \xhres \right) 
= \left( \xptb, \xctr, \xhres \right) \in \real^{52},  
\end{equation} 
where $\xptb = (x_1, \ldots, x_{50}) \in \real^{50}$.  At each
station, we use the forecasts for the corresponding latitude-longitude
gridbox of size $0.25 \times 0.25$ degrees, and we consider prediction
horizons of 1 to 5 days.  For example, the two day forecast is
initialized at 00:00 Universal Coordinated Time (UTC) and issued for
24-hour accumulated precipitation from 06:00 UTC on the next calendar
day to 06:00 UTC on the day after.  ECMWF forecast data are available
online via the TIGGE system \citep{Bougeault2010, Swinbank2016}

Statistical postprocessing is both a calibration and a downscaling
problem: Forecasts and observations are at different spatial scales,
whence the unprocessed forecasts are subject to representativeness
error \citep[Chapter~8.9]{Wilks2019}.  Indeed, if we interpret the
predictive distribution from the raw ensemble \eqref{eq:x} as the
empirical distribution of all 52 members --- a customary approach,
which we adopt hereinafter --- there is a strong bias in probability
of precipitation forecasts: Days with exactly zero precipitation are
predicted much less often at the NWP model grid box scale than they
occur at the point scale of the observations.

\subsection{BMA, EMOS and HCLR}  \label{subsec:BMA_EMOS}

Before describing our IDR implementation, we review its leading
competitors, namely, state of the art parametric distributional
regression approaches that have been developed specifically for
accumulated precipitation.

Techniques of ensemble model output statistics
\citep[EMOS;][]{Gneiting2005b} type can be interpreted as parametric
instances of generalized additive models for location, scale and shape
\citep[GAMLSS;][]{Rigby2005}.  The specific variant of
\citet{Scheuerer2014} which we use here is based on the
three-parameter family of left-censored generalized extreme value
(GEV) distributions.  The left-censoring generates a point mass at
zero, corresponding to no precipitation, and the shape parameter
allows for flexible skewness on the positive half-axis, associated
with rain, hail or snow accumulations.  The GEV location parameter is
modeled as a linear function of $\xhres$, $\xctr$, $\mptb =
\frac{1}{50} \sum_{i=1}^{50} x_i$ and
\[
\pzero = \frac{1}{52} 
\left( \one \{ \xhres = 0 \} + \one \{ \xctr = 0 \} + \sum_{i=1}^{50} \one \{ x_i = 0 \} \right),
\]
and the GEV scale parameter is linear in the Gini mean difference
\eqref{eq:Gini} of the 52 individual forecasts in the covariate vector
\eqref{eq:x}.  While all parameters are estimated by minimizing the
in-sample CRPS, the GEV shape parameter does not link to the
covariates.

The general idea of the Bayesian model averaging
\citep[BMA;][]{Raftery2005} approach is to employ a mixture
distribution, where each mixture component is parametric and
associated with an individual ensemble member forecast, with mixture
weights that reflect the member's skill.  Here we use the BMA
implementation of \citet{Sloughter2007} for accumated precipitation in
a variant that is based on $\xhres$, $\xctr$, $\mptb = \frac{1}{50}
\sum_{i=1}^{50} x_i$ only, which we found to achieve more stable
estimates and superior predictive scores than variants based on all
members, as proposed by \citet{Fraley2010} in settings with smaller
groups of exchangeable members.  Hence, our BMA predictive CDF is of
the form
\[
F_x(y) = w_{\rm HRES} \hsp G(y \hsp | \hsp \xhres) 
+ w_{\rm CTR} \hsp G(y \hsp | \hsp \xctr) + w_{\rm PTB} \hsp G(y \hsp | \hsp \mptb)  
\]
for $y \in \real$, where the component CDFs $G(y \hsp | \, \cdot \, )$
are parametric, and the weights $w_{\rm HRES}$, $w_{\rm CTR}$ and
$w_{\rm PTB}$ are nonnegative and sum to one.  Specifically, $G(y \hsp
| \hsp \xhres)$ models the logit of the point mass at zero as a linear
function of $\sqrt[3]{\xhres}$ and $p_{\rm HRES} = \one \{ \xhres =
0 \}$, and the distribution for positive accumulations as a gamma
density with mean and variance being linear in $\sqrt[3]{\xhres}$
and $\xhres$, respectively, and analogously for $G(y \hsp | \hsp
\xctr)$ and $G(y \hsp | \hsp \mptb)$.  Estimation relies on a two-step
procedure, where the (component specific) logit and mean models are
fitted first, followed by maximum likelihood estimation of the weight
parameters and the (joint) variance model via the EM algorithm
\citep{Sloughter2007}.

Heteroscedastic censored logistic regression \citep{Messner2014}
originates from the observation that conditional CDFs can be estimated
by dichotomizing the random variable of interest at given thresholds
and estimating the probability of threshold exceedance via logistic
regression.  The HCLR model used here assumes that square-root
transformed precipitation follows a logistic distribution censored at
zero, with location parameter linear in $\sqrt{\xhres}$,
$\sqrt{\xctr}$ and the mean of the square-root transformed perturbed
forecasts, and a scale parameter linear in the standard deviation of
the square-root transformed perturbed forecasts.  Like EMOS, HCLR can
be interpreted within the GAMLSS framework of \cite{Rigby2005}.

Code for BMA, EMOS and HCLR is available within the {\tt ensembleBMA},
{\tt ensembleMOS} and {\tt crch} packages in \textsf{R}
\citep{Messner2018}.  Unless noted differently, we use default options
in implementation decisions.

\subsection{Choice of partial order for IDR}  \label{subsec:orders} 

IDR applies readily in this setting, without any need for adaptations
due to the mixed-discrete continuous character of precipitation
accumulation, nor requiring data transformations or other types of
implementation decisions.  However, the partial order on the elements
\eqref{eq:x} of the covariate space $\cX = \real^{52}$, or on a
suitable derived space, needs to be selected thoughtfully, considering
that the perturbed members $x_1, \ldots, x_{50}$ are exchangeable.

In the sequel, we apply IDR in three variants. Our first implementation
is based on $\xhres$, $\xctr$ and $\mptb = \frac{1}{50}
\sum_{i=1}^{50} x_i$ along with the componentwise order on $\real^3$,
in that
\begin{equation}  \label{eq:order_cw} 
x \preceq x' \, \iff \, 
\mptb \leq \mptb', \: \xctr \leq \xctr', \: \xhres \leq \xhres'.
\end{equation}
The second implementation uses the same variables and partial order,
but combined with a simple subagging approach: Before applying IDR,
the training data is split into the two disjoint subsamples of
training observations with odd and even indices, and we average the
predictions based on these two subsamples.

Our third implementation combines the empirical increasing convex
order for $\xptb$ with the usual total order on $\real$ for $\xhres$,
whence
\begin{equation}  \label{eq:order_icx} 
x \preceq x' \, \iff \, \xptb \icx \xptb', \: \xhres \leq \xhres'. 
\end{equation} 
Henceforth, we refer to the three implementations based on the partial
orders in \eqref{eq:order_cw} and \eqref{eq:order_icx} as \IDRcw,
\IDRsbg, and \IDRicx.  With reference to the discussion preceding
Theorem \ref{thm:existence}, the relations \eqref{eq:order_cw} and
\eqref{eq:order_icx} define preorders on $\real^{52}$ and partial
orders on $\real^3$ and $\real^{50}_{\uparrow} \times \real$,
respectively.

We have experimented with other options as well, e.g., by
incorporating the maximum $\max_{i = 1, \ldots, 50} x_i$ of the
perturbed members in the componentwise order in \eqref{eq:order_cw},
with the motivation that the maximum might serve as a proxy for the
spread of the ensemble, or by using the empirical stochastic order
$\st$ in lieu of the empirical increasing convex order $\icx$ in
\eqref{eq:order_icx}.  IDR is robust to changes of this type, and the
predictive performance remains stable, provided that the partial order
honors the key substantive insights, in that the perturbed members
$x_1, \ldots, x_{50}$ are exchangeable, while $\xhres$, due to its
higher native resolution, is able to capture local information that is
not contained in $\xptb$ nor $\xctr$.  Hence, $\xhres$ ought to play a
pivotal role in the partial order.

\subsection{Selection of training periods}  \label{subsec:training} 

The selection of the training period is a crucial step in the
statistical postprocessing of NWP output.  Most postprocessing
methods, including the ones used in this analysis, assume that there
is a stationary relationship between the forecasts and the
observations.  As \citet{Hamill2018} points out, this assumption is
hardly ever satisfied in practice: NWP models are updated, instruments
at observation stations get replaced, and forecast biases may vary
seasonally.  These problems are exacerbated by the fact that
quantitative precipitation forecasts require large training datasets
in order to include sufficient numbers of days with non-zero
precipitation and extreme precipitation events.

For BMA and EMOS, a training period over a rolling window of the
latest available 720 days at the time of forecasting is (close to)
optimal at all stations.  This resembles choices made by
\citet{Scheuerer2015} who used a training sample of about 900 past
instances.  \citet{Scheuerer2014} took shorter temporal windows, but
merged instances from nearby stations into the training sets, which is
not possible here.  In general, it would be preferable to select
training data seasonally (e.g., data from the same month), but in our
case the positive effect of using seasonal training data does not
outweigh the negative effect of a smaller sample size.

As a nonparametric technique, IDR requires larger sets of training
data than BMA or EMOS.  As training data for IDR, we used all data
available at the time of forecasting, which is about $2\,500$ to
$3\,000$ days for the stations Frankfurt, Brussels and Zurich, and
$1\,500$ days for London Heathrow.  The same training periods are also
used for HCLR, where no positive effect of shorter, rolling training
periods has been observed \citep{Messner2014}.

For evaluation, we use the years 2015 and 2016 (and 01 January 2017)
for all postprocessing methods and the raw ensemble.  This test
dataset consists of roughly 700 instances for each station and lead
time.

\subsection{Results}  \label{subsec:results} 

Before comparing the BMA, EMOS, \IDRcw , \IDRsbg \ and \IDRicx
\ techniques in terms of out-of-sample predictive performance over the
test period, we exemplify them in Figure \ref{fig:example_CDFs}, where
we show predictive CDFs for accumulated precipitation at Brussels on
December 16, 2015, at a prediction horizon of 2 days.  In panel (a)
the marks at the bottom correspond to $\xhres$, $\xctr$, the perturbed
members $x_1, \ldots, x_{50}$ and their mean $\mptb$.  The observation
at 4 mm is indicated by the vertical line.  Under all four techniques,
the point mass at zero, which represents the probability of no
precipitation, is vanishingly small.  While the BMA, EMOS and HCLR
CDFs are smooth and supported on the positive half-axis, the \IDRcw ,
\IDRsbg \ and \IDRicx \ CDFs are piecewise constant with jump points
at observed values in the training period.  Panel (b) illustrates the
hard and soft constraints on the \IDRcw \ CDF that arise from
\eqref{eq:predbounds} under the order relation \eqref{eq:order_cw},
with the thinner lines representing the \IDRcw \ CDFs of direct
successors and predecessors.  In this example, the constraints are
mostly hard, except for threshold values between 4 and 11 mm.

\begin{figure}[t]
\centering
\includegraphics[width = \textwidth]{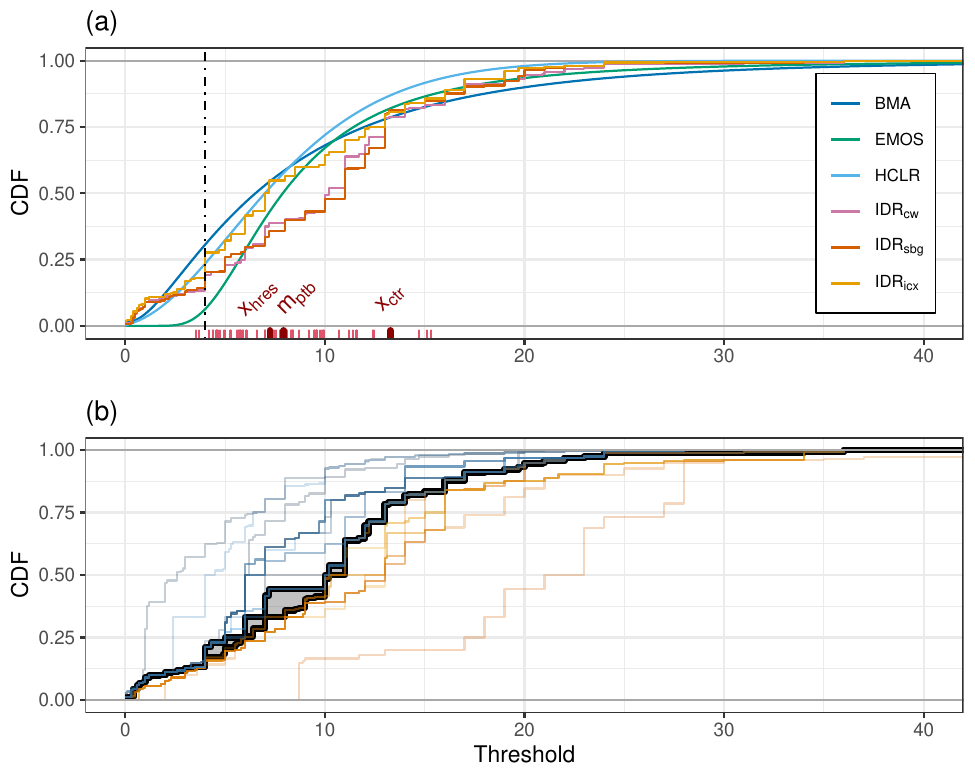}
\caption{Distributional forecasts for accumulated precipitation at
  Brussels, valid 16 December 2015 at a prediction horizon of 2 days.
  (a) BMA, EMOS, \IDRcw, \IDRsbg \ and \IDRicx \ predictive CDFs.  The
  vertical line represents the observation.  (b) \IDRcw \ CDF along
  with the hard and soft constraints in \eqref{eq:predbounds} as
  induced by the order relation \eqref{eq:order_cw}.  The thin lines
  show the \IDRcw \ CDFs at direct predecessors and
  successors.  \label{fig:example_CDFs}}
\end{figure}

\begin{figure}[ht]
\centering
\includegraphics[width = \textwidth]{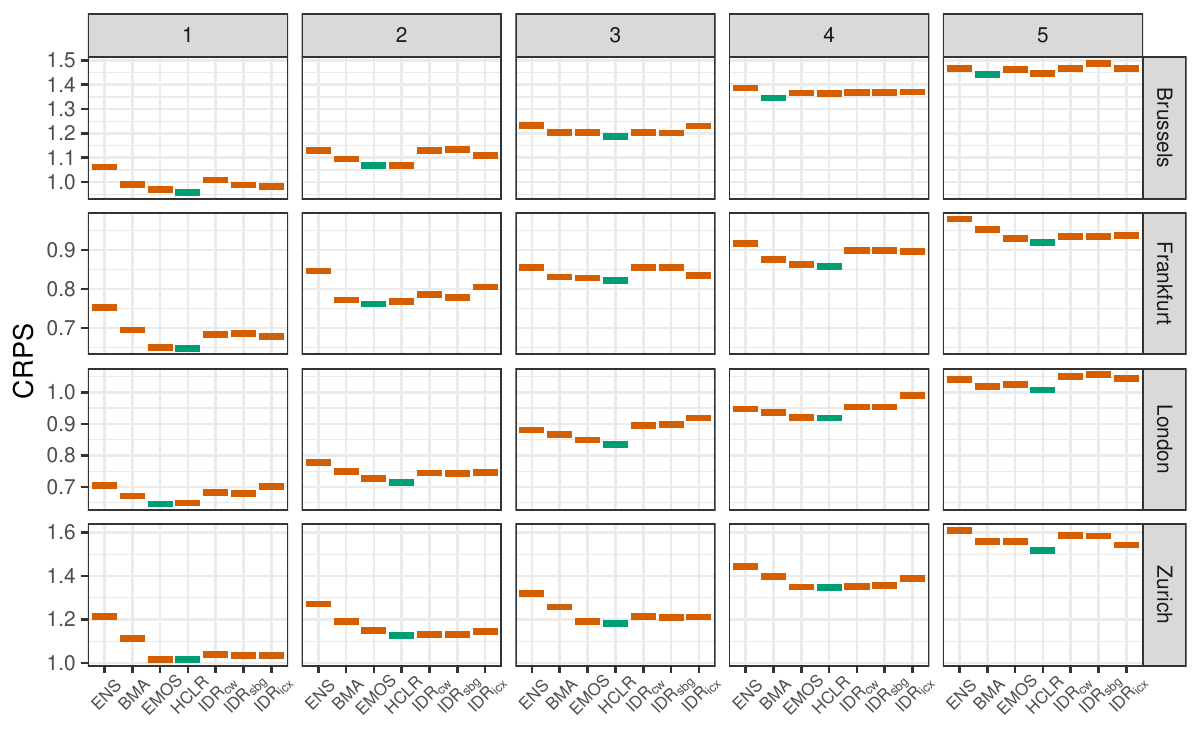}
\caption{Mean CRPS over the test period for raw and postprocessed
  ensemble forecasts of 24-hour accumulated precipitation at
  prediction horizons of 1, 2, 3, 4 and 5 days.  The lowest mean score
  for a given lead time and station is indicated in
  green.  \label{fig:CRPS}}
\end{figure}

\begin{figure}[ht]
\centering
\includegraphics[width = \textwidth]{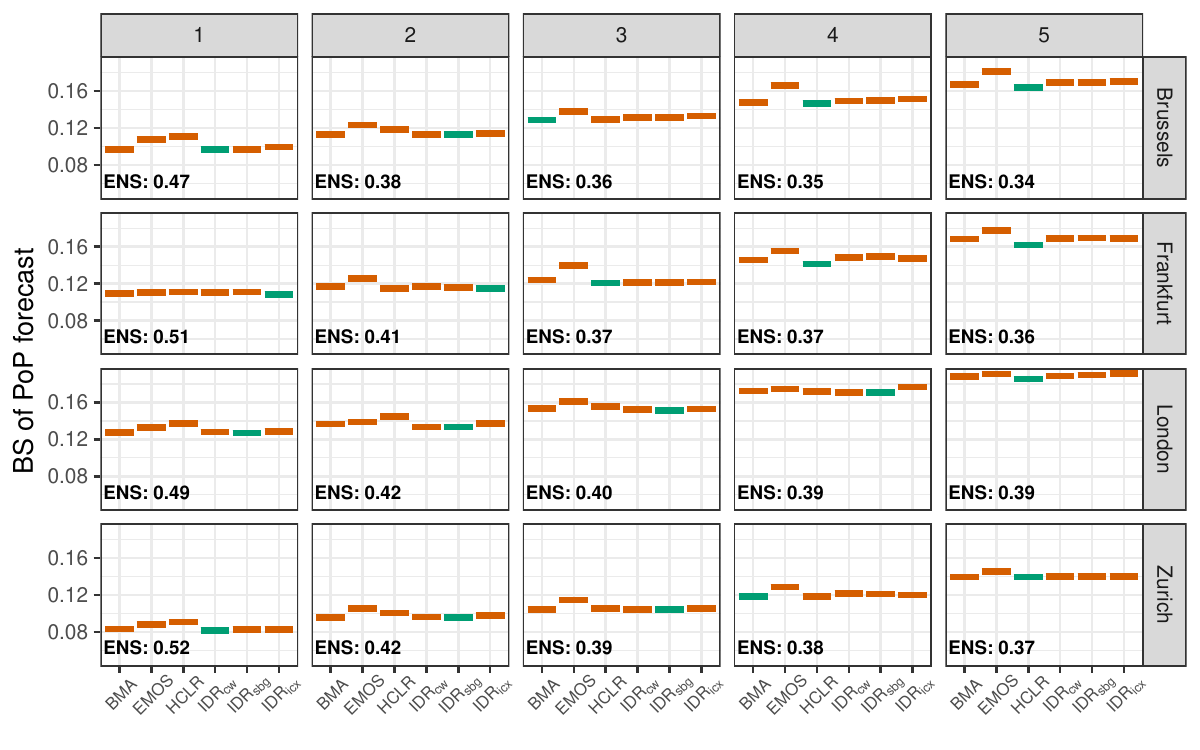}
\caption{Mean Brier score over the test period for probability of
  precipitation forecasts at prediction horizons of 1, 2, 3, 4 and 5
  days.  The lowest mean score for a given lead time and station is
  indicated in green.  \label{fig:BS}}
\end{figure}

We now use the mean CRPS over the test period as an overall measure of
out-of-sample predictive performance.  Figure \ref{fig:CRPS} shows the
CRPS of the raw and postprocessed forecasts for all stations and lead
times, with the raw forecast denoted as ENS.  While HCLR performs best
in terms of the CRPS, the IDR variants show scores of a similar
magnitude and outperform BMA in many instances.  Figure \ref{fig:PIT} in Appendix \ref{app:tab_fig} shows the
difference of the empirical cumulative distribution function (ECDF) of
the PIT defined at \eqref{eq:PIT} to the bisector for the
distributional forecasts.  All three IDR variants show a
PIT-distribution close to uniform, and so do BMA, EMOS and HCLR, as
opposed to the raw ensemble, which is underdispersed.

In Figure \ref{fig:BS} we evaluate probability of precipitation
forecasts by means of the Brier score \citep{Gneiting2007a}, and
Figure \ref{fig:reliability} in Appendix \ref{app:tab_fig} shows reliability diagrams {\green \citep{Wilks2019,
    Dimitriadis2021}}.  As opposed to the raw ensemble forecast, all
distributional regression methods yield reliable probability
forecasts.  BMA, \IDRcw , \IDRsbg \ and \IDRicx \ separate the
estimation of the point mass at zero, and of the distribution for
positive accumulations, and the four methods perform ahead of EMOS.
HCLR is outperformed by BMA and the IDR variants at lead times of one
or two days, but achieves a lower Brier score at the longest lead time
of five days.

Interestingly, IDR tends to outperform EMOS and HCLR for probability
of precipitation forecasts, but not for precipitation accumulations.
We attribute this to the fact that parametric techniques are capable
of extrapolating beyond the range of the training responses, whereas
IDR is not: The highest precipitation amount judged feasible by IDR
equals the largest observation in the training set.  Furthermore,
unlike EMOS and HCLR, IDR does not use information about the spread of
the raw ensemble, which is inconsequential for probability of
precipitation forecasts, but may impede distributional forecasts of
precipitation accumulations.

In all comparisons, the forecast performance of \IDRcw \ and \IDRsbg
\ is similar.  However, in our implementation, the simple subagging
method used in \IDRsbg \ reduced the computation time by up to one
half.

To summarize, our results underscore the suitability of IDR as a
benchmark technique in probabilistic forecasting problems.  Despite
being generic as well as fully automated, IDR is remarkably
competitive relative to state of the art techniques that have been
developed specifically for the purpose.  In fact, in a wide range of
applied problems that lack sophisticated, custom-made distributional
regresssion solutions, IDR might well serve as a ready-to-use,
top-performing method of choice.

\section{Discussion}  \label{sec:discussion} 

\citet{Stigler1975} gives a lucid historical account of the 19th
century transition from point estimation to distribution estimation.
In regression analysis, we may be witnessing what future generations
might refer to as the transition from conditional mean estimation to
conditional distribution estimation, accompanied by a simultaneous
transition from point forecasts to distributional forecasts
\citep{Gneiting2014}.

Isotonic distributional regression (IDR) is a nonparametric technique
for estimating conditional distributions that takes advantage of
partial order relations within the covariate space.  It can be viewed
as a far-reaching generalization of pool adjacent violators (PAV)
algorithm based classical approaches to isotonic (non-distributional)
regression, is entirely generic and fully automated, and provides for
a unified treatment of continuous, discrete and mixed
discrete-continuous real-valued response variables.  Code for the
implementation of IDR within \textsf{R} \citep{R} and Python
(\url{https://www.python.org/}) is available {\green via the {\tt
    isodistrreg} package at CRAN
  (\url{https://CRAN.R-project.org/package=isodistrreg}) and} on github
(\url{https://github.com/AlexanderHenzi/isodistrreg};
\url{https://github.com/evwalz/isodisreg}), with user-friendly
functions for partial orders, estimation, prediction and evaluation.

IDR relies on information supplied by order constraints, and the
choice of the partial order on the covariate space is a critical
decision prior to the analysis.  Only variables that contribute to the
partial order need to be retained, and the order constraints serve to
regularize the IDR solution.  Weak orders lead to increased numbers of
comparable pairs of training instances and predictive distributions
that are more regular.  The choice of the partial order is typically
guided and informed by substantive expertise, as illustrated in our
case study, and it is a challenge for future research to investigate
whether the selection of the partial order could be automated.  Given
that IDR gains information through order constraints, it is a valid
concern whether it is robust under misspecifications of the partial
order.  There is evidence that this is indeed the case: IDR has
guaranteed in-sample threshold calibration (Theorem
\ref{thm:universality}) and therefore satisfies a minimal requirement
for reliable probabilistic forecasts under any (even misspecified)
partial order.  Moreover, \citet[Theorem 7]{ElBarmi2005} show that in
the special case of a discrete, totally ordered covariate, isotonic
regression asymptotically has smaller estimation error than
non-isotonic alternatives even under mild violations of the
monotonicity assumptions, akin to the performance of IDR in the
non-isotonic setting \eqref{eq:sim_noniso} in our simulation study.

Unlike other methods for distributional regression, which require
implementation decisions, such as the specification of parametric
distributions, link functions, estimation procedures and convergence
criteria, to be undertaken by users, IDR is fully automatic once the
partial order and the training set have been identified.  In this
light, we recommend that IDR be used as a benchmark technique in
distributional regression and probabilistic forecasting problems.
With both computational efficiency and the avoidance of overfitting in
mind, IDR can be combined with subsample aggregation (subagging) in
the spirit of random forests.  In our case study on quantitative
precipitation forecasts, we used simplistic ad hoc choices for the
size and number of subsamples.  Future research on computationally
efficient algorithmic implementations of IDR as well as optimal and
automated choices of subsampling settings is highly desirable.

A limitation of IDR in its present form is that we only consider the
usual stochastic order on the space $\cP$ of the conditional
distributions.  Hence, IDR is unable to distinguish situations where
the conditional distributions agree in location but differ in spread,
shape or other regards.  This restriction is of limited concern for
response variables such as precipitation accumulation or income, which
are bounded below and right skewed, but may impact the application of
IDR to variables with symmetric distributions.  In this light, we
encourage future work on ramifications of IDR, in which $\cP$ is
equipped with partial orders other than the stochastic order,
including but not limited to the likelihood ratio order
\citep{Moesching2020a}.  Similarly, the ``spiking'' problem of
traditional isotonic regression, which refers to unwarranted jumps of
estimates at boundaries, arguably did not have adverse effects in our
simulation and case studies.  However, it might be of concern in other
applications, where remedies of the type proposed by \citet{Wu2015}
might yield improvement and warrant study.

Another promising direction for further research are generalizations
of IDR to multivariate response variables.  In weather prediction,
this would allow simultaneous postprocessing of forecasts for several
variables, and an open question is for suitable notions of
multivariate stochastic dominance that allow efficient estimation in
such settings.

\section*{Acknowledgements} 

The authors are grateful to {\green the editor, the associate editor},
four anonymous referees, {\green Sebastian Arnold}, Lutz D\"umbgen, Ed
George, Stephan Hemri, Alexander Jordan, Kristof Kraus, Alexandre M\"osching, Anja
M\"uhle\-mann, Aaditya Ramdas, Nicholas Reich, Benedikt Schulz, Peter
Vogel, {\green Jonas Wallin} and Eva-Maria Walz for providing code,
assistance with the data handling, comments and discussion.  Tilmann
Gneiting acknowledges support by the Klaus Tschira Foundation, by
Deutsche Forschungs\-gemeinschaft (DFG, German Research Foundation) --
Project-ID 257899354 -- TRR 165, and via the Fellowship programme
operated by the European Centre for Medium-Range Weather Forecasts
(ECMWF).  Alexander Henzi and Johanna F.~Ziegel gratefully acknowledge
financial support from the Swiss National Science Foundation.
Calculations were mostly performed on UBELIX
(\url{http://www.id.unibe.ch/hpc}), the HPC cluster at the University
of Bern.

\begingroup
\setstretch{0.95}
\bibliographystyle{chicago}
{\footnotesize \bibliography{biblio}}
\endgroup

\newpage
\hphantom{0}

\vspace{0.2in}
\noindent The appendices are available as Supplementary Material on the JRSSB website; see \url{https://rss.onlinelibrary.wiley.com/doi/full/10.1111/rssb.12450}.
\vspace{0.2in}

\appendix

\section{Proofs for Section \ref{subsec:theory}}  \label{app:theory} 

\begin{proof}[Proof of Theorem \ref{thm:existence}]
	Let $\cA$ be the lattice of all subsets of $\{ 1, \ldots, n \}$ that
	yield admissible superlevel sets for an increasing function $\{ x_1,
	\ldots, x_n \} \to \real$.  More precisely, a set $A \subseteq \{ 1,
	\ldots, n \}$ belongs to $\cA$ if and only if for any $i \in A$ and
	any $x_j$ with $x_i \preceq x_j$ it follows that $j \in A$.
	
	Let $z \in \real$.  By \citet[Theorem 1 and Lemma
	4]{Jordan2021}, the minimizer of the criterion
	\begin{equation}\label{eq:squared}
		\frac{1}{n} \sum_{i=1}^n \left( p_i - \one \{ z \geq y_i \} \right)^2
	\end{equation}
	over all $\bp =(p_1, \ldots, p_n) \in \real^n_{\downarrow, \hsp \bx}$
	is uniquely determined and given by $\bhF(z) = (\hat{F}_1(z), \ldots,
	\hat{F}_n(z)) \in \real^n$ with
	\begin{equation}  \label{eq:exsol}
		\hat{F}_i(z) = \min_{A \in \cA: i \in A} \max_{A' \in \cA : A'
			\subsetneq A} \frac{1}{\#(A \backslash A')} \sum_{j \in A \backslash
			A'} \one \{ y_j \leq z \},
	\end{equation}
	for $i = 1, \ldots, n$, where $\# B$ denotes the cardinality of a set
	$B$.  From the definition of the CRPS it is clear that $\bF$ minimizes
	$\ell_{\mathrm{CRPS}}(\bF)$ over all tuples of functions $\bF = (F_1,
	\ldots, F_n)$ with $F_i : \real \to \real$ such that for each $z \in
	\real$, $(F_1(z), \ldots, F_n(z)) \in \real^n_{\downarrow, \hsp \bx}$.
	It remains to show that for each $i = 1, \ldots, n$, $F_i$ is a valid
	CDF.
	
	Let $i \in \{ 1, \ldots, n \}$, $z \leq z'$, $B \subseteq \{ 1,
	\ldots, n\}$.  It is clear from \eqref{eq:exsol} that the domain of
	$F_i$ in $[0,1]$.  Furthermore,
	\begin{equation}  \label{eq:mono}
		\frac{1}{\# B} \sum_{j \in B} \one \{ y_j \leq z \} \leq \frac{1}{\#
			B} \sum_{j \in B} \one \{ y_j \leq z' \},
	\end{equation}
	and therefore, by \eqref{eq:exsol}, $F_i(z) \leq F_i(z')$.  The
	function $F_i$ is also right-continuous because for $z' \downarrow z$,
	the right-hand side of \eqref{eq:mono} converges to the left-hand
	side.  Finally, for $z \to \pm\infty$ the left-hand side of
	\eqref{eq:mono} converges to zero and one, respectively, which
	concludes the proof.
\end{proof}

\begin{proof}[Proof of Theorem \ref{thm:universality}]
	First, we show threshold calibration.  Let $(X,Y)$ be a random vector
	with distribution $(1/n)\sum_{i=1}^n \delta_{(x_i,y_i)}$ where
	$\delta_{(x_i,y_i)}$ denotes the Dirac measure at $(x_i,y_i)$.  Let $z
	\in \real$.  By \citet[Theorem 6.4]{Lee1983}, there exists a partition
	$\{ B_m \}_{m=1}^M$ of $\{ 1, \dots, n \}$ such that
	\[
	F_i(z) = F_{x_i}(z) = \sum_{m=1}^M \one \{ i \in B_m \}
	\frac{1}{\#B_m} \sum_{j \in B_m} \one \{ y_j \leq z \}.
	\]
	Therefore, the $\sigma$-algebra generated by $F_X(z)$ is contained in
	the $\sigma$-algebra generated by $\{ \bar{B}_m \}_{m=1}^M$ with
	$\bar{B}_m = \{ (x_i,y_i) : \; i \in B_m \}$.  Furthermore,
	\begin{align*}
		\myE \left( \one \{ Y \leq z \} \one \{ (X,Y) \in \bar{B}_m \} \right) 
		& = \frac{1}{n} \sum_{j \in B_m} \one \{ y_j \leq z \} \\
		& = \myE \left( F_X(z) \one \{ (X,Y) \in \bar{B}_m \} \right).
	\end{align*}
	
	Part i) for the scoring rules of type \eqref{eq:CRPS_P} follows
	directly from the arguments in the proof of Theorem
	2.1.  Let $z \in \real$.  By \citet[Theorem 1 and Lemma 4]{Jordan2021} the solution $\bhF(z)$ at
	\eqref{eq:exsol} is not only the unique minimizer of the criterion
	\eqref{eq:squared} but also the unique solution that minimizes
	\begin{equation} \label{eq:criterion2}
		\frac{1}{n} \sum_{i=1}^n 
		\left( \one \{ c < p_i \} - \one \{ y_i \leq z \} \right) 
		\left( c - \one \{ y_i \leq z \} \right)
	\end{equation}
	over all $\bp = (p_1, \ldots, p_n) \in \real^n_{\downarrow, \hsp \bx}$
	simultaneously for all $c \in (0,1)$.  As $\bhF \in \cP^n_{\downarrow,
		\hsp \bx}$, and $(1/n) \sum_{i=1}^n \myS_{z,c}(F_i,y_i)$ is equal to
	the expression at \eqref{eq:criterion2} with $p_i = F_i(z)$, we obtain
	the claim.
	
	Part iii) is a direct consequence of the arguments for the second part
	of part i) and the representation theorem of \citet{Schervish1989} for
	proper scoring rules of binary events.
	
	Let $\alpha \in (0,1)$.  Concerning part ii), observe that any
	function $\mys_\alpha$ satisfying the requirements of the theorem can
	be written as $\int \tilde{\myS}^Q_{\alpha,\theta}(q,y) \diff
	h(\theta)$ for some Borel measure $h$ on $\real$; see \citet[Theorem
	1]{Ehm2016}.  Here,
	\[
	\tilde{\myS}^Q_{\alpha, \hsp \theta}(q,y) = \begin{cases} 
		1 - \alpha, & y \le \theta < q, \\
		\alpha, & q \le \theta < y, \\
		0, & \textrm{otherwise}. \\ 
	\end{cases} 
	\]
	By \citet[Theorem 1 and Proposition 5]{Jordan2021} there exists a unique
	solution $\bhq(\alpha) = (\hat{q}_1(\alpha), \ldots,
	\hat{q}_n(\alpha)) \in \real^n_{\downarrow, \hsp \bx}$ that minimizes
	\[
	\frac{1}{n}\sum_{i=1}^n \tilde{\myS}^Q_{\alpha,\theta}(q_i,y_i)
	\]
	over all $\bq = (q_1, \ldots, q_n) \in \real^n_{\downarrow, \hsp \bx}$
	simultaneously over all $\theta \in \real$ such that for each $i \in
	\{ 1, \ldots, n \}$, $\hat{q}_i(\alpha)$ is the lower
	$\alpha$-sample-quantile of some subset of observations $B_i \subseteq
	\{ y_1, \ldots, y_n \}$.  Indeed, the solution has a max-min
	representation as in \eqref{eq:exsol} with the empirical mean of the
	indcators replaced by the lower $\alpha$-sample quantile over all
	observations in $A \backslash A'$.  The max-min representation for
	$\hat{q}_i(\alpha)$ yields that $\hat{q}_i(\cdot)$ is increasing and
	left-continuous because lower $\alpha$-sample-quantiles are increasing
	and left-continuous as a function of $\alpha$.  Therefore,
	$\hat{q}_i(\cdot)$ is a valid quantile function for each $i = 1,
	\ldots, n$, and the generalized inverse $\bhq^{-1} = (\hat{q}_1^{-1},
	\ldots, \hat{q}_n^{-1})$ is a member of $\cP_{\uparrow, \hsp \bx}^n$.
	
	Since $\myS^Q_{\alpha,\theta}(F,y) =
	\tilde{\myS}^Q_{\alpha,\theta}(F^{-1}(\alpha),y)$ for any CDF $F$, it
	follows from \eqref{eq:CRPS3} that $\bhq^{-1}$ is a CRPS-based
	isotonic regression of $\by$ on $\bx$.  To conclude the proof of part
	ii), it remains to note that $\bhq^{-1} = \bhF$ due to the uniqueness
	of $\bhF$.  The initial statement in part i) is now also immediate.
\end{proof}

\section{Proofs and remarks for Section \ref{subsec:consistency}}  \label{app:consistency} 

The proof of Theorem \ref{thm:consistency} requires the following
lemma, which is established in \citet[Theorem 4.6]{Moesching2020}. 

\begin{lem} \label{lem:upperbound}
	Let $Z_1, Z_2, \ldots$ be independent random variables with
	distribution functions $G_1, G_2, \ldots$, respectively.  For $k = 1,
	2, \ldots$, let
	\[
	\hat{\mathbb{G}}_k(\cdot) = \frac{1}{k} \sum_{i = 1}^k \one\{Z_i \leq \cdot\} 
	\quad \text{and} \quad
	\bar{G}_k (\cdot) = \frac{1}{k} \sum_{i = 1}^k G_i(\cdot).
	\]
	Then there exists a universal constant $M \leq 2^{5/2}e$ such that for all $\eta \geq 0$,
	\[
	\myP \left( \sqrt{k} \, \| \hat{\mathbb{G}}_k - \bar{G}_k \|_{\infty} \geq \eta \right) \leq M \exp(-2\eta^2),
	\]
	where $\|\cdot\|_{\infty}$ denotes the usual supremum norm of functions.
\end{lem}

\begin{proof}[Proof of Theorem \ref{thm:consistency}] 
	Let $\epsilon, \delta > 0$.  By assumption (iv), there exists $r > 0$
	such that
	\begin{equation} \label{eq:unifcontinuity}
		\sup \{| F_x(y) - F_{x'}(y)|: \, x, x' \in [0,1]^d, \|x-x'\| \leq r, \, y \in \real \} < \frac{\epsilon}{4}.
	\end{equation}
	Let $m = \max(\lceil 2/r \rceil, \lceil 2/\delta \rceil + 1)$ and
	define intervals $I_1 = [0, 1/m]$ and $I_j = ((j-1)/m, j/m]$ for $j =
	2, \ldots, m$.  For indices $j_1, \ldots, j_d \in \{ 1, \ldots, m \}$,
	let $I(j_1, \ldots, j_d) = \times_{k = 1}^d I_{j_k} \subset
	[0,1]^d$.  The collection of such rectangles, which we denote by
	$\cR$, partitions $[0,1]^d$ into $m^d$ disjoint subsets with $\sup_{x,
		x' \in I(j_1, \ldots, j_d)} \|x - x'\| \leq r/2$.  
	
	By assumption (i), for each $J \in \cR$, there exists $c_J > 0$ such
	that with asymptotic probability one, $\# (S_n \cap J) \geq n c_J$.
	Define $c = \min_{J \in \cR} c_J > 0$, so that with asymptotic
	probability one, $\# (S_n \cap J) \geq nc > 0$.  We assume in the
	following that for $(X_{n1},Y_{nn}),\dots,(X_{nn},Y_{nn})$ the event
	in assumption (i) occurs for all $J \in \cR$ as well as the event in
	assumption (ii). To ease notation, we drop the subscript $n$.
	
	Let $x = (x_1, \ldots, x_d) \in [\delta,1-\delta]^d$. Then $2/m <
	\delta \leq \min_{i = 1\ldots, d} x_i$ and $\max_{i = 1, \ldots, d}
	x_i \leq 1-\delta < (m-2)/m$, and there exist indices $j_1, \ldots,
	j_d \in \{3, \ldots, m-2\}$ such that $x \in I(j_1, \ldots, j_d)$.
	Define
	\[
	L(x) = I(j_1-1, \ldots, j_d-1), \quad U(x) = I(j_1+1, \ldots, j_d+1).
	\]
	Then $v \preceq x \preceq w$ for all $v \in L(x)$ and $w \in U(x)$,
	and
	\[
	\sup_{v \in L(x)} \|v-x\| \leq r, \quad \sup_{w \in U(x)} \|w-x\| \leq r.
	\]
	We see from \eqref{eq:unifcontinuity} that
	\[
	\sup_{v \in L(x) \cup U(x), y \in \real} |F_{v}(y) - F_x(y)| \leq \frac{\epsilon}{4},
	\]
	whereas the bounds in \eqref{eq:estbounds} give
	\[
	\hat{F}_{X_u}(y) \leq \hat{F}_x(y) \leq \hat{F}_{X_l}(y), \quad 
	y \in \real, \ X_u \in U(x), \ X_l \in L(x).
	\]
	Consequently, for $y \in \real$,
	\begin{align*}
		|\hat{F}_x(y) - F_x(y)| & \leq \max_{j: X_j \in L(x) \cup U(x)} |\hat{F}_{X_j}(y) - F_{X_j}(y)| + \frac{\epsilon}{4} \\
		& \leq \sup_{j: X_j \in (1/m, (m-1)/m]^d, y \in \real} |\hat{F}_{X_j}(y) - F_{X_j}(y)| + \frac{\epsilon}{4},
	\end{align*}
	and this upper bound does not depend on $x$.  Therefore, it is
	sufficient to show that
	\begin{equation}  \label{eq:sufficient} 
		\lim_{n \to \infty} 
		\myP \left( \sup_{j: X_j \in (1/m, (m-1)/m]^d, \, y \in \real} |\hat{F}_{X_j}(y) - F_{X_j}(y)| \geq \frac{3\epsilon}{4} \right) = 0.
	\end{equation}
	
	Let $\cA_n$ be the collection of upper sets in $S_n$.  By the min-max
	formula for antitonic regression, for $j = 1, \ldots, n$ and $y \in
	\real$,
	\[
	\hat{F}_{X_j}(y) = \min_{A \in \cA_n: X_j \in A} \max_{A' \in \cA_n: X_j \not\in A'} 
	\frac{1}{\# (A \setminus A')} \sum_{i: X_i \in A \setminus A'} \one \{ Y_i \leq y \}.
	\]
	For $X_j \in (1/m, (m-1)/m]^d$, let $j_i = \max \{ k : k/m < X_{j,i}
	\} - 1$ and $x_j = (j_1/m, \ldots, j_d/m) \in \real^d$.  Here,
	$X_{j,i}$ denotes the $i$-th component of $X_j$. Then, for all $v
	\in [x_j, X_j] := \{ u \in [0,1]^d : x_j \preceq u \preceq X_j \}$
	it holds that $\|v-X_j\| \leq 2/m \leq r$.  Therefore, inequality
	\eqref{eq:unifcontinuity} along with assumption (iii) imply that for
	all $i$ in $\{1, \ldots, n\}$ such that $X_i \succeq x_j$,
	\[
	F_{X_i}(y) \leq F_{x_j}(y) \leq F_{X_j}(y) + \frac{\epsilon}{4}, 
	\quad y \in \real.
	\]
	Consequently, with $A_j = \{ v \in [0, 1]^d : v \succeq x_j \}$,
	\[
	\hat{F}_{X_j}(y) - F_{X_j}(y) \leq \max_{A' \in \cA_n : X_j \not\in A'}
	\frac{1}{\# (A_j \setminus A')} \sum_{i : X_i \in A_j \setminus A'} \!\! \left( \one \{ Y_i \leq y \} - F_{X_i}(y) \right) 
	+ \frac{\epsilon}{4}.
	\]
	By the definition of $j_1, \ldots, j_d$, $I(j_1+1, \ldots, j_d+1)
	\subseteq [x_j, X_j] \subseteq A_j \setminus A'$ for $A' \in \cA_n$
	with $X_j \not\in A'$.  Therefore, $\# (A_j \setminus A') \geq cn >
	0$, where $c$ is the constant introduced at the beginning of the
	proof.  Lemma \ref{lem:upperbound} implies that for all $A' \subseteq
	A_j$ with $X_j \not\in A'$, conditional on $X_1,\dots,X_n$,
	\[
	\myP \! \left( \sup_{y \in \real} \frac{1}{\#(A_j \setminus A')} 
	\left| \sum_{i: X_i \in A_j \setminus A'} \!\! \left( \one \{ Y_i \leq y \} - F_{X_i}(y) \right) \right| 
	\geq \frac{\epsilon}{2} \right)
	\leq M \exp \! \left( - \frac{c}{2} \epsilon^2 n \right) \! ,
	\]
	with a constant $M \leq 2^{5/2}e$ that does not depend on $j$.  In
	view of the Bonferroni inequality we get the upper bound
	\begin{align*}
		\myP \left( \sup_{y \in \real} \left( \hat{F}_{X_j}(y) - F_{X_j}(y) \right) \geq \frac{3\epsilon}{4} \right) 
		& \leq \sum_{A' \in \cA : X_j \not\in A'} M \exp \left( - \frac{c}{2} \epsilon^2n \right) \\
		& \leq \#(\cA_n) \ M \exp \left( - \frac{c}{2} \epsilon^2n \right), 
	\end{align*}
	which does not depend on $j$.  
	
	For $A \in \cA_n$, let $m(A) = \{ x \in A : z \in A, z \preceq x
	\implies z = x \} \subseteq A$ be the associated set of minimal
	elements.  Then $A = A' \iff m(A) = m(A')$ for $A, A' \in \cA_n$, and
	so the number of upper sets in $S_n$ equals the number of antichains.
	The size of a maximal antichain, which we denote by $s_n$, satisfies
	$s_n \geq 1$ and, by assumption (ii), $s_n \le n^{\gamma}$. So if $n$
	is sufficiently large, $n^{\gamma} < n/2$ and
	\[
	\#(\cA_n) 
	\leq \sum_{k=1}^{s_n} \binom{n}{k} 
	\leq s_n \frac{n!}{(n-s_n)! \, s_n!} 
	\leq \lceil n^{\gamma} \rceil \frac{n!}{(n - \lceil n^{\gamma} \rceil)! \, \lceil n^{\gamma} \rceil!}.
	\]
	By Stirling's formula, the right hand side is asymptotically equivalent to
	\begin{align*}
		n^{\gamma} 
		& \frac{\sqrt{2\pi n} \, (n/e)^n}{\sqrt{2\pi (n - n^\gamma)} \, ((n - n^\gamma)/e)^{n-n^\gamma} 
			\sqrt{2\pi n^\gamma} \, (n^\gamma/e)^{n^\gamma}} \\
		& = \frac{n^{-\gamma/2}}{\sqrt{2\pi (1 - n^{\gamma-1})}} 
		\frac{n^n}{(n - n^\gamma)^{n - n^\gamma} n^{\gamma \, n^\gamma}} \\
		& = \frac{1}{\sqrt{2\pi(1 - n^{\gamma-1})}} n^{-\gamma/2 + n^\gamma(1-\gamma)} (1 - n^{\gamma-1})^{n^\gamma - n} \\ 
		& = \frac{1}{\sqrt{2\pi(1 - n^{\gamma-1})}} 
		\exp \left( \left(- \frac{\gamma}{2} + (1-\gamma) n^{\gamma} \right) \log n \right) (1 - n^{\gamma-1})^{n^\gamma - n},   
	\end{align*}
	where the factor $(1-n^{\gamma-1})^{n^\gamma - n} =
	((1-n^{\gamma-1})^{n^{1 - \gamma}})^{-n^\gamma(1-n^{\gamma-1})}$ grows
	no faster than $\exp(n^{\gamma})$, because $(1-1/x)^x \leq \exp(-1)$
	for $x \geq 1$.  Combining these results, we see that for $n$
	sufficiently large, $\#(\cA_n) \leq \exp(C_1 \, n^\gamma \log n)$,
	where $C_1$ is a constant that depends on $\gamma$.  Hence, for $n$
	sufficiently large,
	\begin{align*}
		\myP \left( \sup_{y \in \real} \left( \hat{F}_{X_j}(y) - F_{X_j}(y) \right) \geq \frac{3\epsilon}{4} \right) 
		& \leq \#(\cA_n) \, M \exp \left( - \frac{c}{2} \epsilon^2 n \right) \\
		& \leq M \exp \left( - \frac{c}{2} \epsilon^2 n + C_1 n^\gamma \log n \right) \\
		& \leq M \exp \left( - C_2 n \right)
	\end{align*}
	for some strictly positive constant $C_2$ that depends on $\gamma$.
	This upper bound does not depend on $j$, so
	\[
	\myP \left( \sup_{j : X_j \in (1/m, (m-1)/m]^p, \, y \in \real} \left( \hat{F}_{X_j}(y) - F_{X_j}(y) \right) \geq \frac{3\epsilon}{4} \right) 
	\leq M \exp \left( - C_2 n \right) n 
	\]
	vanishes as $n \to \infty$.  Analogous arguments yield the bound with
	$F_{X_j}$ and $\hat{F}_{X_j}$ interchanged, which establishes
	\eqref{eq:sufficient} and completes the proof.
\end{proof}

As noted, the broad applicability of Theorem \ref{thm:consistency}
rests on a powerful combinatorial result of \citet[Corollary
2]{Brightwell1992}, which enables us to deduce consistency without
having to check complex regularity conditions of the type in
\citet{Robertson1975}.  The size of a maximal antichain also appears
in the derivation of risk bounds for multiple isotonic regression for
the mean in \citet[p. 2447, and Lemma 4 in \green{their} Supplementary
Material]{Han2019}.  Their Lemma 4 gives an asymptotic lower bound
of $n^{1-1/d}$ for the size of a maximal antichain among $n$
independent and identically distributed covariates $X_1, \ldots, X_n
\in \real^d$ with any Lebesgue density bounded from above, and might
in fact also be derived from \citet[Corollary 2]{Brightwell1992}.  An
intuitive explanation for the lower bound $n^{1-1/d}$ is that any
distribution with bounded Lebesgue density can be restricted to a
fixed subset where the density is positive, and asymptotically the
maximum antichain of $X_1, \ldots, X_n$ within this subset behaves as
if $X_i \sim \operatorname{Unif}[0,1]^d$, regardless of the dependence
structure.  This is an interesting result, because if the speed of
convergence hinges on the maximal size of an antichain, as our proof
and results in \citet{Han2019} suggest, then it may not be possible to
improve the speed of convergence by assuming positively correlated
components.  Therefore, we believe that positive dependency between
the components of the covariate vector does not affect convergence
rates, though clearly it may have positive effects in finite sample
settings.

\section{Proofs for Sections \ref{subsec:st} and \ref{subsec:icx}}  \label{app:orders} 

\begin{proof}[Proof of Proposition \ref{prop:st}]
	Denote the CDF corresponding to the empirical distribution of $x_1,
	\ldots, x_d$ and of $x'_1, \ldots, x'_d$ by $F$ and $G$, respectively.
	For part i), assume that $x_{(i)} \leq x'_{(i)}$ for $i = 1, \ldots,
	d$, and let $z \in \real$. Then,
	\[
	F(z) = \frac{\# \{ i : x_{(i)} \leq z \}}{d} \geq \frac{\# \{ i : x'_{(i)} \leq z \}}{d} = G(y),
	\]
	hence $F$ is smaller than $G$ in the usual stochastic order.
	Conversely, if $F$ is smaller then $G$, by choosing $z = x'_{(k)}$, $k
	= 1, \ldots, d$, we obtain
	\[
	\frac{\# \{ i : x_{(i)} \leq x'_{(k)} \}}{d} = F(x'_{(k)}) \geq G(x'_{(k)}) = \frac{ \# \{ i : x'_{(i)} \leq x'_{(k)} \}}{d}.
	\]
	By definition of the $k$-th order statistic, we know that $\# \{ i :
	x'_{(i)} \leq x'_{(k)} \} \geq k$ (with equality if the $x'_i$ are
	distinct).  Therefore, $\# \{ i : x_{(i)} \leq x'_{(k)} \} \geq k$.
	This can only be true if $x_{(k)} \leq x'_{(k)}$.
	
	Concerning part ii), we can assume without loss of generality that
	$x_1 \leq x_2 \leq \cdots \leq x_d$, otherwise we reorder the pairs
	$(x_i, y_i)$.  Now apply part i): We know that $x_1 \leq x'_1$ and
	$x'_{(1)} \geq x_j$ for some $j$.  But the components of $x$ are
	sorted, hence $x'_{(1)} \geq x_{j} \geq x_1 = x_{(1)}$, and also $x'_1
	\geq x'_{(1)} \geq x_j$.  So we can think of the positions of $x'_1$
	and $x'_{(1)}$ in $x'$ to be exchanged, without violating the
	condition $x \preceq x'$.  Now we can ignore the pair $(x'_1,
	x'_{(1)})$ and proceed in the same way for remaining components
	$(x_i)_{i=2}^d$ and $(x'_i)_{i=2}^d$.
	
	For the proof of part iii), assume the opposite, i.e., $x_i \geq x'_i$
	for $i = 1, \ldots, d$.  By ii), we know that $x \succeq_{\textrm{st}}
	x'$.  By assumption $x \st x'$, hence $x$ and $x'$ are permutations of
	each other.  But then either $x = x'$, or $x$ and $x'$ cannot be
	comparable in the componentwise order.
	
	The last part is immediate from part i).  
\end{proof}

\begin{proof}[Proof of Proposition  \ref{prop:icx}]
	Part i) is a consequence of Theorem 4.A.3 of \citet{Shaked2007}.  Part
	ii) follows from part i) and Proposition \ref{prop:st} i).  For part
	iii) note that the Gini mean difference has the equivalent formula
	\[
	g(x) = \frac{2}{d(d-1)} \sum_{i = 1}^d x_{(i)}(2i - d - 1),
	\]
	which can be rewritten as
	\[
	g(x) = \frac{4}{d(d-1)} \sum_{i = 1}^d \sum_{j = i}^d x_{(j)} - 2\frac{d + 1}{d(d - 1)}\sum_{i=1}^d x_i.
	\]
	Part i) implies that
	\begin{align*}
		g(x') +  2 \frac{d + 1}{d(d - 1)} \sum_{i=1}^d x_i' 
		& = \frac{4}{d(d-1)} \sum_{i = 1}^d \sum_{j = i}^d x_{(j)}'  \\
		& \geq \frac{4}{d(d-1)} \sum_{i = 1}^d \sum_{j = i}^d x_{(j)}  
		= g(x) +  2\frac{d + 1}{d(d - 1)}\sum_{i=1}^d x_i. \qedhere
	\end{align*}
\end{proof}

\section{Large sample equivalence of CRPS and \protect\boldmath $L_2$ measures}  \label{app:CRPS.L2} 

Here we show that the difference between the mean CRPS for the
distributional regression method at hand and the mean CRPS for the
ideal forecast is large sample equivalent to the (squared) $L_2$ error
in conditional distribution estimation.  This relates the CRPS, as
introduced by \citet{Matheson1976} and arguably the most prevalent
measure of predictive performance in distributional forecasting
\citep{Gneiting2007a}, to traditional $L_p$ measures, as used by
\citet{Hall1999} and \citet{Spady2018} in the evaluation of
conditional cumulative distribution function (CDF) estimation.

Specifically, suppose that the random variates $(x_1, y_1), \ldots,
(x_m, y_m)$ are independent identically distributed from a population
with bivariate law $G$.  Let $F(Y|X)$ be any estimate of the
conditional distributions of $Y$ given $X$, and for $i = 1, \ldots, m$
let $F_i = F( Y \mid X = x_i)$ and $G_i = G( Y \mid X = x_i)$ denote
the respective conditional CDFs for $x_1, \dots, x_m$.  Subject to the
conditions of the bivariate strong law of large numbers,
\[
\bar{\myS}_m^F = \frac{1}{m} \sum_{i=1}^m \textrm{CRPS}(F_i,y_i) 
\to \myE_{(X,Y) \sim G} \left[ \textrm{CRPS}(F(Y|X),Y) \right]
\]
and 
\[  
\bar{\myS}_m^G = \frac{1}{m} \sum_{i=1}^m \textrm{CRPS}(G_i,y_i) 
\to \myE_{(X,Y) \sim G} \left[ \textrm{CRPS}(G(Y|X),Y) \right]
\]
almost surely.  Therefore, subject to the conditions of the strong law
and Fubini's theorem,
\begin{align*}  
	\bar{\myS}_m^F - \bar{\myS}_m^G 
	& \to \myE_{X \sim G} \, \myE_{Y \sim G(Y \mid X)} 
	\left[ \textrm{CRPS}(F(Y|X),Y) - \textrm{CRPS}(G(Y|X),Y) \mid X \right] \\
	& = \myE_{X \sim G} \left[ \int_{-\infty}^\infty \! \left( F(y \mid X) - G(y \mid X) \right)^2 \textrm{d}y \right] \\
	& = \myE_{X \sim G} \left[ L_2^2 \left( F( \cdot \mid X), G( \cdot \mid X) \right) \right] 
\end{align*} 
almost surely, where the first equality uses the analytic form of the
CRPS divergence \citep[p.~367]{Gneiting2007a}.

In the context of the simulation study in Section
4, the above setting corresponds to a single
of the 500 Monte Carlo replicates, where $F$ is an estimate on a
training set of size $n$, and performance is evaluated on an
independent test sample of size $m = 5\,000$.  The large sample
arguments remain valid when scores are averaged across Monte Carlo
replicates.

\section{Additional tables and figures} \label{app:tab_fig}

Table \ref{tab:sim_description} provides implementation details
{\green for} the distributional regression methods {\green in} the
simulation study {\green in Section \ref{sec:simulation_study}}.

\begin{table}
	\caption{Implementation details for the distributional regression
		methods in the simulation study.  We list the \textsf{R} packages
		and the specific functions used for estimation and prediction, along
		with choices for tuning parameters.  For nonparametric kernel
		smoothing (NP) we use Gaussian kernels in \eqref{eq:sim_standard},
		\eqref{eq:sim_discontinuous}, and \eqref{eq:sim_noniso} and the {\tt
			liracine} kernel in the Poisson scenario \eqref{eq:sim_poisson}.
		To fit semiparametric quantile regression (SQR) and conditional
		transformation models (TRAM) we employ cubic $B$-splines with
		interior knots from 2 to 8 in steps of 2 and boundary knots 0 and 10
		({\tt bs(x, \ldots)}).  For TRAM, we use continuous outcome logistic
		regression ({\tt Colr}) for \eqref{eq:sim_standard},
		\eqref{eq:sim_discontinuous}, and \eqref{eq:sim_noniso}, and ordered
		categorical regression ({\tt Polr}) in \eqref{eq:sim_poisson}.  For
		further detail, see the code, which is available at
		\url{https://github.com/AlexanderHenzi/isodistrreg}.
		\label{tab:sim_description} \bigskip}
	
	\renewcommand{\arraystretch}{1.1}
	\resizebox{\textwidth}{!}{
	\begin{tabular}{l|l}
		\toprule
			  \multicolumn{2}{l}{\hspace{0.5cm} Package}  \\
		NP & {\tt np} \citep{Hayfield2008} \\
		SQR & {\tt quantreg} \citep{Koenker2020}\\
		TRAM & {\tt tram} \citep{Hothorn2020} \\
		QRF &  {\tt grf} \citep{Tibshirani2020}\\
		IDR & {\tt isodistrreg} \\
		\IDRsbg & {\tt isodistrreg} \\
		\midrule
			  \multicolumn{2}{l}{\hspace{0.5cm} Estimation}  \\
		NP & {\tt npcdistbw(nmulti = 4, oykertype = "liracine", bwtype = adaptive\_nn")} \\
		SQR & {\tt rq(y$\sim$., data = cbind(y = y, bs(x, \ldots)), tau = seq(0.005,0.995,0.001))} \\
		TRAM & {\tt Colr/Polr(y$\sim$., data = cbind(y = y, bs(x, \ldots)))} \\
		QRF & {\tt quantile\_forest(min.node.size = 40, quantiles = seq(0.01,0.99,0.01))} \\
		IDR & {\tt idr()} \\
		\IDRsbg & {\tt idrbag(b = 100, digits = 6, p = 1/2)} \\
		\midrule 
			  \multicolumn{2}{l}{\hspace{0.5cm} Prediction}  \\
		NP & {\tt npcdist(eydat = grid)} \\
		SQR & {\tt predict.rqs()} \\
		TRAM & {\tt predict.ctm(K = 5000,\ type = "distribution")} \\
		QRF & {\tt predict.quantile\_forest(quantiles = seq(0.005,0.995,0.001))} \\
		IDR & {\tt predict.idrfit(digits = 6)} \\
		\IDRsbg & {\tt idrbag(b = 100, digits = 6, p = 1/2)} \\
		\bottomrule
	\end{tabular}
	}
\end{table}

Figure \ref{fig:PIT} assesses the probabilistic calibration of the
postprocessing methods for precipitation forecasts in the case study
{\green in Section \ref{sec:case_study}.  Similarly,} Figure
\ref{fig:reliability} shows reliability diagrams for the postprocessed
probability of precipitation forecasts, {\green using the CORP
	approach developed} by \citet{Dimitriadis2021}.  

\begin{sidewaysfigure}[p]
	\includegraphics[width = 0.95 \textwidth]{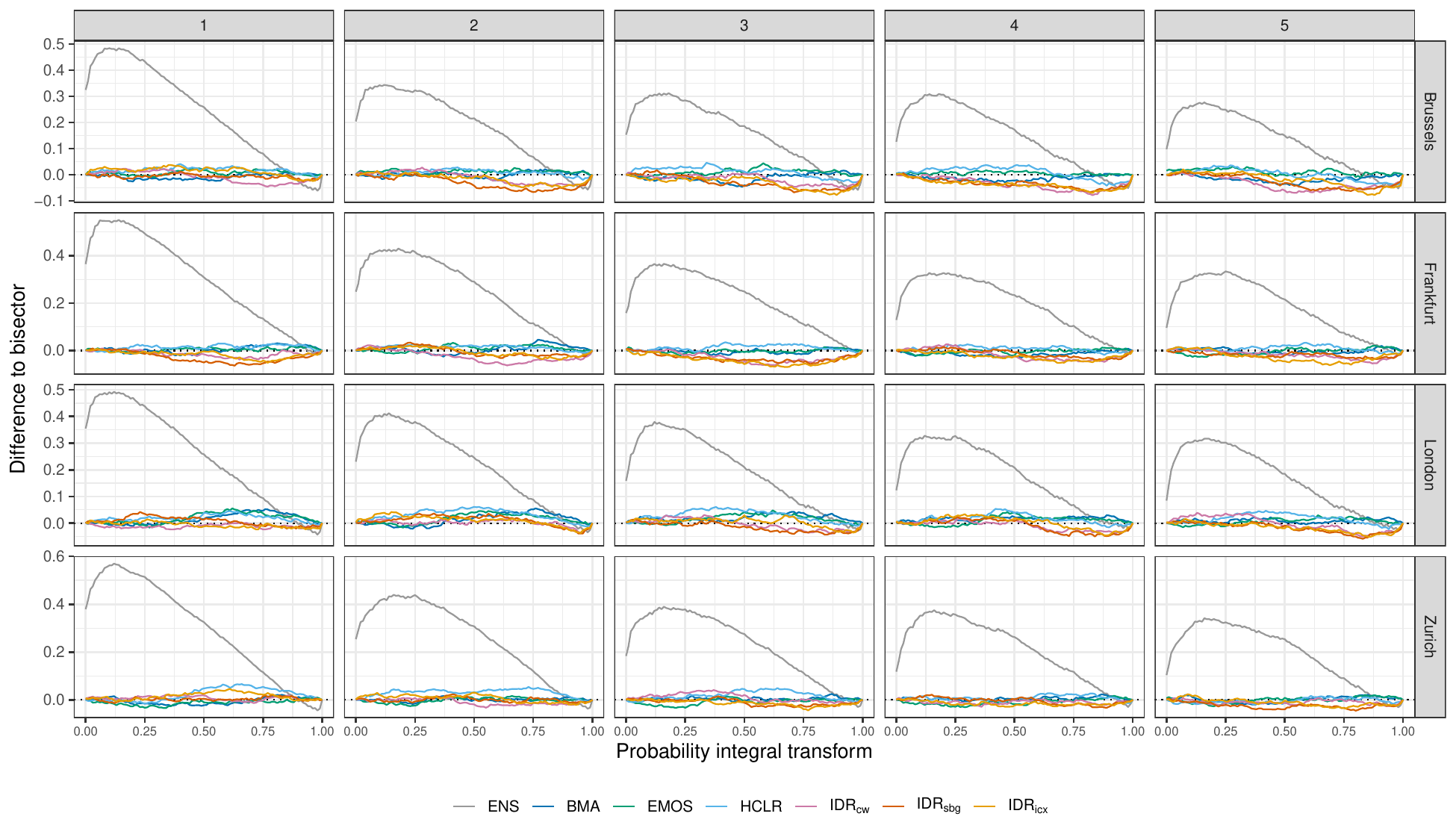}
	\caption{Difference of the ECDF of the PIT to the bisector (positive:
		above bisector) for raw and postprocessed ensemble forecasts of
		24-hour accumulated precipitation at prediction horizons of 1, 2, 3,
		4 and 5 days, for the test period.  \label{fig:PIT}}
\end{sidewaysfigure}

\begin{sidewaysfigure}[p]
	\includegraphics[width = 0.95 \textwidth]{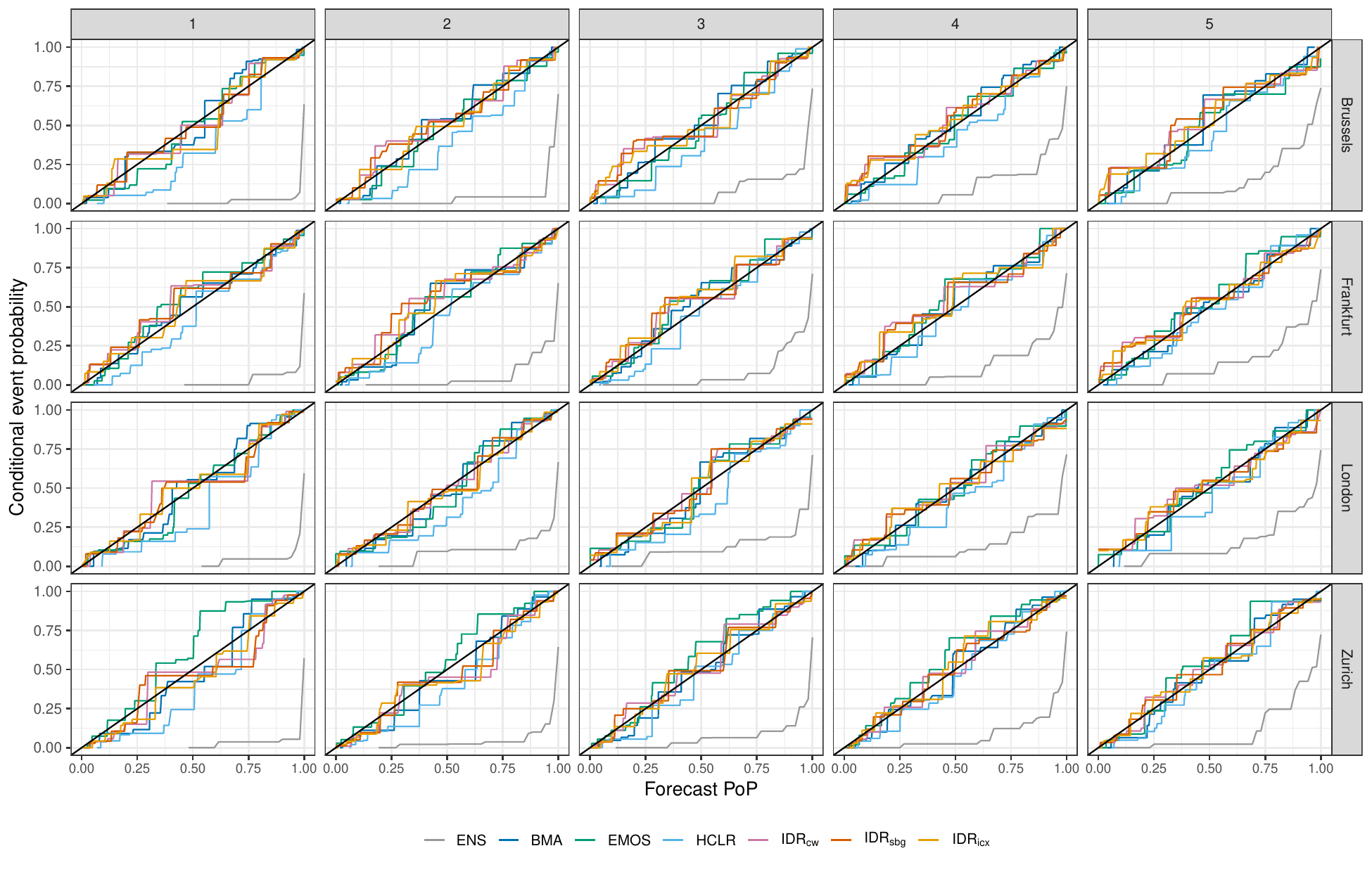}
	\caption{CORP reliability diagrams for probability of precipitation
		forecasts at prediction horizons of 1, 2, 3, 4 and 5 days, for the
		test period.  \label{fig:reliability}}
\end{sidewaysfigure}

\end{document}